\documentclass[11pt]{article}
\usepackage{times}
\usepackage{anysize}
\marginsize{1in}{1in}{1in}{1in}

\usepackage{microtype}
\usepackage{graphicx}
\usepackage{booktabs} 
\usepackage{caption}
\usepackage{subcaption}
\usepackage{lineno}
\usepackage{hyperref}

\usepackage{amsmath}
\usepackage{amssymb}
\usepackage{mathtools}
\usepackage{amsthm}
\usepackage[capitalize,noabbrev]{cleveref}
\usepackage[utf8]{inputenc} 
\usepackage{comment}
\usepackage{cleveref}
\usepackage[justification=centering]{caption}
\usepackage{longtable}
\usepackage{tabularx,booktabs}
\usepackage{multirow}
\usepackage{algorithm}
\usepackage[noend]{algorithmic}
\allowdisplaybreaks[4]
\theoremstyle{plain}
\newtheorem{theorem}{Theorem}

\newtheorem{lemma}[theorem]{Lemma}

\theoremstyle{definition}
\newtheorem{definition}[theorem]{Definition}

\theoremstyle{remark}

\begin{document}

\title{Data-dependent Evaluations for Budgeted Submodular Maximization} 

  \author{
    Lejian Zhang\textsuperscript{\rm 1},
    Xueyan Tang\textsuperscript{\rm 1},
    and Jing Tang\textsuperscript{\rm 2}\\
    \textsuperscript{\rm 1}
    College of Computing and Data Science,
    Nanyang Technological University, Singapore\\
    \textsuperscript{\rm 2}
    Data Science and Analytics Thrust, 
    The Hong Kong University of Science \\ and Technology (Guangzhou), China\\
    lejian001@e.ntu.edu.sg, asxytang@ntu.edu.sg, jingtang@ust.hk
  }
  \date{}
  \maketitle

\begin{abstract}
Submodular maximization is an important building block for developing algorithms in many areas such as machine learning and data mining. Due to the NP-hardness of the problem, analysis of submodular maximization algorithms typically provides pessimistic worst-case approximation factors only. It is not easy to evaluate how close a produced solution is to an optimal one for a given problem instance. In this paper, we develop new data-dependent upper bounds for submodular maximization with a knapsack constraint. We theoretically prove that they dominate the optimal solution and empirically demonstrate their advantages in certifying how close to optimal a solution is through experiments with real-world datasets.
\end{abstract}

\section{Introduction}

A great number of optimization problems in different areas can be modeled as submodular maximization problems, including facility location~\cite{lindgren2016leveraging}, feature selection~\cite{das2011submodular}, recommendation system~\cite{leskovec2007cost}, influence maximization~\cite{tang2018online}, document summarization~\cite{lin2010multi}, maximum coverage~\cite{feige1998threshold}, sensor placement~\cite{krause2008near}, exemplar sampling~\cite{gomes2010budgeted} and market expansion~\cite{dughmi2009revenue}.

    Due to the NP-hard nature of submodular maximization, 
    many approximation algorithms have been developed, attempting to find good solutions in polynomial time. 
    Approximation algorithms are typically evaluated by the {\em approximation factor}, which refers to the worst-case ratio between the function value of an output solution and that of an optimal solution. 
    Nemhauser and Wolsey
    \cite{nemhauser1978best} introduced a simple greedy algorithm which guarantees an approximation factor of $1-1/e$ for the problem of Monotone Submodular Maximization with a Cardinality constraint (MSMC) (finding a set $S$ of a given size which maximizes $f(S)$). But the greedy algorithm does not guarantee any positive approximation factor for the more general problem of Monotone Submodular Maximization with a Knapsack constraint (MSMK) (finding a set $S$ of total cost at most a given budget which maximizes $f(S)$, assuming elements have costs). 
    Wolsey
    \cite{wolsey1982maximising} improved the simple greedy algorithm by a small modification and showed that it achieves an approximation factor of 0.357. 
    Khuller et al. \cite{KHULLER199939} attempted to derive an approximation factor of $1-1/\sqrt{e} \approx 0.393$ for the modified greedy algorithm, but their analysis was found flawed~\cite{10.1145/3219819.3219946}.
    Tang et al.
    \cite{tang2021revisiting} 
    proved that this modified greedy algorithm actually guarantees an approximation factor of  0.405. Subsequently, more careful approximation analysis showed that the exact approximation factor of the modified greedy algorithm is between 0.427 and 0.42945~\cite{Feldman2023,Kulik2021}. Both the simple and modified greedy algorithms run in $O(n^2)$ time, where $n$ is the size of the ground set. 
    Sviridenko
    \cite{sviridenko2004note} proposed a $(1-1/e)$-approximation algorithm at the expense of increasing the time complexity to $O(n^5)$. 
    The time complexity of the algorithm was later reduced to $O(n^4)$ without sacrificing the approximation factor~\cite{Feldman2023,Kulik2021}.
    Yaroslavtsev et al.
    \cite{yaroslavtsev2020bring} developed a $0.5$-approximation algorithm called Greedy+Max that runs in $O(n^2)$ time. 
    Feldman et al.
    \cite{Feldman2023} augmented Greedy+Max with a single guess to achieve an approximation factor of $0.6174$ in $O(n^3)$ running time. 
    The running times of these algorithms can be reduced from $O(n^i)$ to $\tilde{O}(n^{i-1}/ \epsilon)$ (ignoring poly-logarithmic terms) at the cost of losing an $\epsilon$ in the approximation factors using the threshold technique of \cite{badanidiyuru2014fast}.

    Although the above algorithms provide approximation factors, these constant factors usually leave a significant gap between the solution obtained and the optimal solution of a particular problem instance because they consider only the worst-case scenarios. It is challenging to measure how close to optimal an output solution is for a specific problem instance in practice. One possible method is to derive upper bounds on the optimal function values on a per-instance basis. We refer to such bounds as {\em data-dependent bounds}. Unlike the constant approximation factors which are fixed and independent of problem instances, data-dependent bounds can be used to characterize the actual quality of the solutions constructed on different problem instances in an instance-aware manner. 
    Recently, a few studies have attempted this method.
    Balkanski et al.
    \cite{balkanski2021instance} and Chakrabarty and Cote \cite{chakrabarty2023primal} introduced instance-specific upper bounds 
    for the MSMC problem, but these bounds cannot be applied to the MSMK problem directly. 
    Tang et al.
    \cite{tang2021revisiting} presented a naive data-dependent upper bound 
    for the MSMK problem. 

    In this paper, we construct and evaluate new data-dependent upper bounds on the optimal solution for the MSMK problem. Our contributions can be summarized as follows: 
    \begin{itemize}
        \item We develop two strategies, a slicing strategy and a removing strategy, to derive data-dependent upper bounds. We theoretically prove that the constructed bounds dominate the optimal solution, and they are tighter than the existing bounds.
        \item We transform these strategies into linear programs so that multiple base sets can be incorporated into the derivation of data-dependent upper bounds, which further tightens the established bounds. 
        \item We conduct extensive experiments with several real-world applications, including maximum coverage, revenue maximization, and feature selection, to demonstrate empirically the advantages of our proposed bounds in certifying the quality of solutions to the MSMK problem. 
    \end{itemize}

The rest of this paper is organized as follows. \Cref{sec:pre} gives a formal definition of the MSMK problem and some preliminaries. \Cref{sec:des} elaborates the design and analysis of our data-dependent upper bounds.
\Cref{sec:exp} discusses the experiments.
Finally, \Cref{sec:con}
concludes the paper.
\section{Preliminaries}
\label{sec:pre}
    
    Given a ground set $V$, a set function $f\colon 2^{V} \rightarrow \mathbb{R}$ is {\em submodular} if for any two sets $A, B \subseteq V$, it holds that
\begin{equation*}
    f(A) + f(B) \ge f(A \cap B) + f(A \cup B).
\end{equation*}
    An equivalent definition of a submodular set function $f$ is that for any two sets $A \subseteq B \subseteq V$ and any element $v \in V \setminus B$, it holds that 
\begin{equation*}
    f(A \cup \{v\}) - f(A) \ge f(B \cup \{v\}) - f(B).
\end{equation*}
    The latter definition describes the {\em diminishing return} property of a submodular set function.

    In this paper, we focus on monotone submodular maximization, where a set function $f$ is {\em monotone} (non-decreasing) if $f(A) \le f(B)$ for any two sets $A \subseteq B \subseteq V$. 
    The input to the MSMK problem includes a ground set $V$, a non-negative monotone submodular set function $f:2^{V} \rightarrow \mathbb{R}_{\ge 0}$, a non-negative modular cost function $c:2^V \rightarrow \mathbb{R}_{\ge 0}$ to measure the cost of selecting elements (where $c(S) = \sum_{v \in S} c(v)$ and $c(v)$ is the cost of element $v$), and a budget $b$. The objective of the MSMK problem is to find a set that maximizes the function value $f$ among all the sets of cost at most $b$, i.e.,
    $\mathop{\arg\max}_{S \subseteq V,\, c(S) \le b} f(S)$.
    Without loss of generality, we assume that for each element $v \in V$, $c(v) \le b$.

    We define the {\em marginal gain} of an element $v$ with respect to a set $S$ as
    $f_S(v) = f(S \cup \{v\}) - f(S)$,
    i.e., the increase in the function value by adding $v$ to $S$.
    We define the {\em marginal density} of an element $v$ with respect to a set $S$ as
    $d_S(v) = \frac{f_S(v)}{c(v)}$,
    i.e., the marginal gain normalized by the element cost. Due to the diminishing return property of a submodular set function, the marginal gain and density of an element $v$ are the highest when $S = \emptyset$ and are the lowest when $S = V \setminus \{v\}$ (among the sets $S$ that do not contain $v$). 
    We refer to the marginal density in the latter case, i.e., $d_{V \setminus \{v\}}(v)$, as the {\em cutoff density}. For notational convenience, we shall use $f_S(T)$ to denote the marginal gain of a set $T$ with respect to another set $S$, 
    i.e.,
    $f_S(T) = f(S \cup T) - f(S)$.

    Let $OPT$ denote an optimal solution to the MSMK problem. For any set $S \subseteq V$, we have
\begin{align}
\label{equ:upb0}
f(OPT) & = f(OPT \cup S) - f_{OPT}(S) \nonumber \\ & = f(S) + f_S(OPT) - f_{OPT}(S) \nonumber \\ & \le f(S) + \max_{T \subseteq V,\, c(T) \le b} (f_S(T) - f_T(S)) \nonumber \\ & \le f(S) + \max_{T \subseteq V,\, c(T) \le b} f_S(T), 
\end{align}
where the first inequality follows from the fact that $c(OPT) \le b$.
    Our main approach is to construct an upper bound $\Lambda (S)$ on the last term $\max_{T \subseteq V,\, c(T) \le b} (f_S(T) - f_T(S))$ or $\max_{T \subseteq V,\, c(T) \le b} f_S(T)$ in \Cref{equ:upb0}. 
    Given any set $S \subseteq V$, we can then derive an upper bound on $f(OPT)$ by computing $f(S) + \Lambda(S)$,     
    where $S$ will be termed the {\em base set}.
    Technically, we develop two strategies: a slicing strategy that constructs an upper bound on $\max_{T \subseteq V,\, c(T) \le b} f_S(T)$, and a removing strategy that derives an upper bound on $\max_{T \subseteq V,\, c(T) \le b} (f_S(T) - f_T(S))$ from an upper bound on $\max_{T \subseteq V,\, c(T) \le b} f_S(T)$. 
    We further transform these strategies into linear programs to facilitate  
    dealing with multiple base sets. 

    For simplicity of presentation, we define the ground set $V := \{1,2,\dots,n\}$. 
    Without loss of generality, we assume that $S=\{1,2,\dots,p\}$ when discussing a base set $S \subseteq V$.
    In addition, we also assume that the elements in $S$ are arranged in non-descending order of cutoff density, i.e., $d_{V\setminus\{1\}}(1) \le d_{V\setminus\{2\}}(2) \le \cdots \le d_{V\setminus\{p\}}(p)$. The elements in the remaining set $V \setminus S$ are arranged in non-ascending order of marginal density with respect to $S$, i.e., $d_S(p+1) \ge d_S(p+2) \ge \cdots \ge d_S(n)$.

\section{Design and analysis of upper bounds}
\label{sec:des}

\subsection{Removing strategy}

    It is easy to infer that $\max_{T \subseteq V,\, c(T) \le b} f_S(T) = \max_{T \subseteq V \setminus S,\, c(T) \le b} f_S(T)$, because $T$ should include only elements from $V \setminus S$ in order to maximize the marginal gain $f_S(T)$. To construct an upper bound on $\max_{T \subseteq V \setminus S,\, c(T) \le b} f_S(T)$,
    an intuitive idea is to find the lowest index $r$ in the remaining set $V \setminus S = \{p+1, p+2,\dots,n\}$ satisfying $\sum_{i=p+1}^{r}c(i) > b$ \cite{tang2021revisiting}. The upper bound is 
    defined as follows:
\begin{equation}
\label{equ:upb1}
    \Lambda^0(S) = \sum_{i=p+1}^{r-1}f_S(i) + \left(b-\sum_{i=p+1}^{r-1}c(i)\right)\cdot d_S(r).  
\end{equation}
    In essence, $\Lambda^0$ takes elements in the remaining set greedily according to their marginal density with respect to the base set $S$ until the budget $b$ is fulfilled. 
    Owing to the submodularity of $f$, it is easy to see that 
    $\Lambda^0(S)$ is an upper bound on $\max_{T \subseteq V \setminus S,\, c(T) \le b} f_S(T) = \max_{T \subseteq V,\, c(T) \le b} f_S(T)$.
    Given a base set $S$, the time complexity to compute $\Lambda^0(S)$ is $O(n \log n) = O(|V| \log |V|)$ due to the sorting of the elements.

    By \Cref{equ:upb0}, $f(S) + \Lambda^0(S)$ gives an upper bound on $f(OPT)$. Nevertheless, it ``consumes'' a total cost of $b+c(S)$. On the other hand, it is impossible for an optimal solution to spend any cost exceeding the budget $b$. 
    We would like to tighten $\Lambda^0$ by removing some elements and make the total cost equal to $b$, i.e., to construct an upper bound on $\max_{T \subseteq V, c(T) \le b} (f_S(T) - f_T(S))$.
    
    We start by defining a continuous version of the marginal gain function with respect to $S$ based on the remaining set $V \setminus S = \{p+1, p+2,\dots,n\}$:  
    \begin{equation}
    \label{eq:G+}
        G_+(x) = \sum_{i=p+1}^{r-1}f_S(i) + \left(x-\sum_{i=p+1}^{r-1}c(i)\right)\cdot d_S(r),
    \end{equation}
    where $r$ is the lowest index satisfying $\sum_{i=p+1}^{r}c(i) > x$.
    
    Note that $\Lambda^0(S)=G_+(b)$. $G_+(x)$ calculates an upper bound on the marginal gain generated by selecting elements from $V \setminus S$ with a total cost of $x$, i.e., for any $T\subseteq V\setminus S$,
    \begin{equation}
    \label{eq:G+property}
        G_+(c(T))\geq \sum_{i\in T}f_S(i) \geq f_S(T).
    \end{equation}
    Obviously, $G_+(x)$ is non-decreasing. Intuitively, we would like to pick $G_+(b-c(S))$ to fill up the budget $b$. However, $G_+(b-c(S))$ is not necessarily an upper bound on $f_S(OPT\setminus S)$, because we cannot guarantee that $c(OPT\setminus S) \le b-c(S)$.
    
    We propose to first choose elements to consume a total cost of $b+x$ and then remove elements from $S$ to reduce the total cost to $b$. To do so, we define another continuous function $G_-(x)$ to calculate a lower bound on the loss in the function value $f$ caused by removing elements with a total cost of $x$.
    Function $G_-(x)$ is defined as:
\begin{align}
    G_-(x) &= \sum_{i=1}^{s-1}f_{V\setminus\{i\}}(i) 
    + \left(x-\sum_{i=1}^{s-1} c(i)\right)\cdot d_{V\setminus\{s\}}(s), 
    \label{eq:gminus} 
\end{align}
    where $s$ is the lowest index in $S=\{1,2,\dots,p\}$ satisfying $\sum_{i=1}^{s}c(i) > x$. That is, $G_-(x)$ takes elements greedily according to their cutoff density until an accumulated cost $x$ is reached. Clearly, $G_-(x)$ is also non-decreasing. It is easy to see that for any set $T \subseteq S\setminus OPT$, we have 
    \begin{equation*}
        G_-(c(T)) \leq \sum_{i \in T} f_{V \setminus \{i\}}(i) \leq f_{V\setminus T}(T) \leq f_{OPT}(T),
    \end{equation*}
    where the first inequality is by the definition of $G_-(x)$ and the last two inequalities are due to the submodularity of $f$.

    To maintain a total cost equal to $b$, if $G_+(b-c(S)+x)$ is picked, we can remove $G_-(x)$ from it.
    Ideally, we would like to set 
    \begin{equation*}
        x^\ast = \max\{c(OPT\cup S) - b, 0\}.
    \end{equation*}
    Then, it holds that $b - c(S) + x^\ast \ge b - c(S) + c(OPT\cup S) - b = c(OPT\cup S) - c(S)$ and $x^\ast \le c(OPT\cup S) - c(OPT)$. 
    Hence, we have
\begin{align*}
    &
    G_+(b - c(S) + x^\ast) - G_-(x^\ast) \\ 
    & \geq  G_+(c(OPT\cup S) - c(S)) 
    - G_-(c(OPT\cup S) - c(OPT)) \\
    & =  G_+(c(OPT\setminus S)) - G_-(c(S\setminus OPT)) \\ 
    & \geq  f_S(OPT\setminus S) - f_{OPT}(S\setminus OPT) \\ 
    & =  f(OPT) - f(S).
\end{align*}
    However, $c(OPT\cup S)$ is not known as it is dependent on the optimal solution $OPT$. Note that $x^\ast \le \max\{c(OPT) + c(S) - b, 0\} \le c(S)$ always holds. Thus, we can pick the maximal value of $G_+(b-c(S)+x) - G_-(x)$ among all $x \in [0, c(S)]$.
    Hence, an upper bound incorporating the removing strategy is defined as
\begin{equation*}
    \Lambda^{1}(S) 
    = \max\limits_{x \in [0, c(S)]}\left\{G_+(b-c(S)+x) - G_-(x)\right\}. 
\end{equation*}
    
Since $G_+(x)$ is a non-decreasing function, we have
\begin{align*}
    \Lambda^0(S) &= G_+(b) 
    \ge \max\limits_{x \in [0, c(S)]}G_+(b-c(S)+x) 
    \\ &
    \ge \max\limits_{x \in [0, c(S)]}(G_+(b-c(S)+x)-G_-(x)) 
    = \Lambda^1(S),
\end{align*}
    which shows that $\Lambda^1$ is at least as good as $\Lambda^0$.

\subsubsection{$\Lambda^1$ in linear program presentation}
\label{sec:lbd 1}
    We can also present the calculation of $\Lambda^{1}$ as a linear program shown in \Cref{lnp:lbd 1}, where $\langle x_1, x_2, \dots, x_n \rangle \in [0,1]^n$ is an indicator vector.\footnote{All linear programs in this paper will have variables $x_1, x_2, \dots, x_n$ forming an indicator vector. We shall omit the constraint $\forall i \in V,\, 0 \le x_i \le 1$ in the presentations of linear programs for conciseness.} 
\begin{align}
\label{lnp:lbd 1}
    \max & \quad \sum_{i \in V\setminus S} f_S(i) \cdot x_i - \sum_{i \in S} f_{V \setminus \{i\}}(i) \cdot (1-x_i)\\
    \text{subject to} & \quad \sum_{i \in V} c(i) \cdot x_i \le b. \nonumber 
\end{align}
    We prove the equivalence between the optimum of \Cref{lnp:lbd 1} and $\Lambda^1(S)$ in Appendix \ref{appendix_lp_lambda1}.

    According to \Cref{lnp:lbd 1}, we can arrange elements in $V$ in descending order according to their ``weights'' in the objective function. That is, for each $i \in S$, its weight is $f_{V \setminus \{i\}}(i)$; for each $i \in V\setminus S$, its weight is $f_S(i)$. We can then pick elements greedily from the beginning and stop when the budget is fulfilled. Given a set $S$, the time complexity to compute $\Lambda^1(S)$ is $O(n \log n) = O(|V| \log |V|)$ due to the sorting of the elements. Thus, $\Lambda^1(S)$ has the same time complexity to compute as $\Lambda^0(S)$.

Similarly, we can also write the calculation of $\Lambda^0$ as a linear program shown in \Cref{lnp:lbd 0}.
\begin{align}
\label{lnp:lbd 0}
    \max & \quad \sum_{i \in V\setminus S} f_S(i) \cdot x_i \\
    \text{subject to} & \quad \sum_{i \in V} c(i) \cdot x_i \le b. \nonumber 
\end{align}

Notice that \Cref{lnp:lbd 1} differs from \Cref{lnp:lbd 0} in only one item: $-\sum_{i \in S}f_{V\setminus \{i\}}(i) \cdot (1-x_i)$. We refer to this item as the ``removing item'', which is in essence the core of the removing strategy.
\subsection{Slicing strategy}

We develop a slicing strategy to derive a data-dependent upper bound for the MSMK problem, which is inspired by Balkanski et al.'s work \cite{balkanski2021instance}.
Their method upper bounds the optimal MSMC solution 
by lower bounding the dual objective of 
finding the set $S$ of minimum size that has a given function value.
We introduce it
from a different perspective to motivate our approach to deal with the MSMK problem.

Recall that the remaining set $V \setminus S = \{p+1,p+2,\dots,n\}$ is sorted in non-ascending order of marginal density (or equivalently marginal gain in the case of MSMC) with respect to $S$. 
To simplify presentation, we define $V_i := \{1,2,\dots,i\}$ and $V_0 := \emptyset$.

Let $k$ be the cardinality constraint. 
If the upper bound $\Lambda^0$ is applied, we pick the top $k$ elements $p+1,p+2,\dots,p+k$ from $V \setminus S$ which have the largest marginal densities with respect to $S$. 
Now we arrange the elements in $OPT \setminus S$ in non-ascending order of marginal density with respect to $S$ and obtain a sorted sequence $\langle o_1, o_2, \dots, o_k \rangle$ where $o_1 < o_2 < \cdots < o_k$. Similarly, we define $O_i := \{o_1, o_2, \dots, o_i\}$ and $O_0 := \emptyset$.
Let us first notice two properties of $\langle o_1, o_2, \dots, o_k \rangle$ as follows:
\begin{itemize}
    \item for each $i \in \{1,2,\dots,k\}$, $f_{S \cup O_{i-1}}(o_i) \le f_S(o_i)$, which is due to the submodularity of $f$;
    \item for each $i \in \{1,2,\dots,k\}$, $f_{S \cup O_{i-1}}(o_i) = f_S(O_i) - f_S(O_{i-1}) \le f_S(\{p+1,p+2,\dots,o_i\}) - \sum_{j=1}^{i-1} f_{S \cup O_{j-1}}(o_j) = f_S(V_{o_i}) - \sum_{j=1}^{i-1} f_{S \cup O_{j-1}}(o_j)$, which is due to the monotonicity of $f$.
\end{itemize}
Balkanski et al.'s method \cite{balkanski2021instance} replaces each $f_{S \cup O_{i-1}}(o_i)$ with a variable $v_i$. Each variable $v_i$ corresponds to a distinct element in $V \setminus S$ because $o_i$ can be any element in $V \setminus S$. 
It then maximizes the total value of the sequence $v_1, v_2, \dots, v_k$ satisfying the above conditions. Since $f_S(o_1), f_{S \cup O_{1}}(o_2), \dots, f_{S \cup O_{k-1}}(o_k)$ is also such a sequence satisfying these conditions, the maximum total value of $v_1, v_2, \dots, v_k$ is an upper bound for it. Thus, $\sum_{i=1}^{k}v_i \ge \sum_{i=1}^{k}f_{S \cup O_{i-1}}(o_i) = f_S(OPT \setminus S)$, and $f(S) + \sum_{i=1}^{k}v_i$ is an upper bound on $f(OPT)$.
The elements picked by this
method do not necessarily have successive indexes. Hence,
it produces an upper bound at least as good as (no larger than) $\Lambda^0$. 

Remember that we would like to construct an upper bound on $\max_{T \subseteq V,\, c(T) \le b} f_S(T) = \max_{T \subseteq V \setminus S,\, c(T) \le b} f_S(T)$ 
for the MSMK problem. 
The main challenge of MSMK is that the elements have non-uniform costs so that the number of elements to pick and fill up the budget $b$ is not predetermined. A naive approach to address it is to choose a fixed value $\epsilon$ and divide the budget $b$ into slices of cost $\epsilon$ each, as illustrated in \Cref{fig:ub2s}. Then, to fill up the budget $b$, we pick $\frac{b}{\epsilon}$ slices. However, the element costs are not necessarily multiples of $\epsilon$. Hence, it is possible for a slice to involve a number of elements, where the number is not fixed and the first as well as last elements may be partially involved. Note that all the elements involved in a slice would have successive indexes. In general, it would not guarantee that the maximum total value of all slices is an upper bound on $f_S(OPT \setminus S)$, 
because the elements in $OPT \setminus S$ may not have successive indexes.

To address this issue, we can choose $\epsilon \rightarrow 0$ so that each slice would involve only one element
and thus share the same marginal density with the element it belongs to. 
The drawback is that setting $\epsilon \rightarrow 0$ would increase the number of slices to pick towards infinity and hence may significantly increase the time complexity of computing the upper bound. To guarantee efficiency, we can pick all the slices belonging to the same element at once.

Motivated by these thoughts, we present a novel and elegant design to 
construct an upper bound $\Lambda^2$ for the MSMK problem. 

We start by defining the concept of a valid partition for the MSMK problem.  

\begin{definition}
\label{def:partitionnew}
    Given a budget $b$ and a set $S$, a sequence of values $v_1,v_2,\dots,v_n \in \mathbb{R}_{\ge 0}$ form a valid partition of a value $v \in \mathbb{R}_{\ge 0}$ with respect to $S$ if
\begin{itemize}    
    \item[(i)] $\sum_{i=1}^n v_i = v$;
\end{itemize}
    and there exists a sequence of costs $s_1, s_2, \dots, s_n \in \mathbb{R}_{\ge 0}$ where $s_i \le c(i)$ represents the cost spent on element $i$ such that
\begin{itemize}
    \item[(ii)] $\sum_{i=1}^n s_i \le b$;
    \item[(iii)] for each $i \in \{1,2,\dots,n\}$, $d_S(i) \cdot s_i = \frac{f_S(i)}{c(i)} \cdot s_i \ge v_i$;
    \item[(iv)] for each $i \in \{1,2,\dots,n\}$, $f_S(V_i) \ge \sum_{j=1}^{i} v_j$.
\end{itemize}
\end{definition}

\begin{figure}[t]
    \centering
    \begin{minipage}[t]{0.45\textwidth}
        \centering
        \includegraphics[width=1\textwidth]{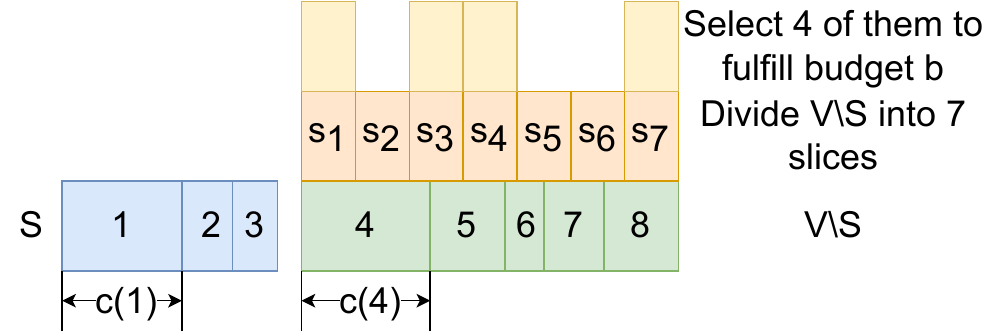}
        \caption{Naive slicing strategy}
        \label{fig:ub2s}
    \end{minipage}
    \hfill
    \begin{minipage}[t]{0.5\textwidth}
        \centering
        \includegraphics[width=1\textwidth]{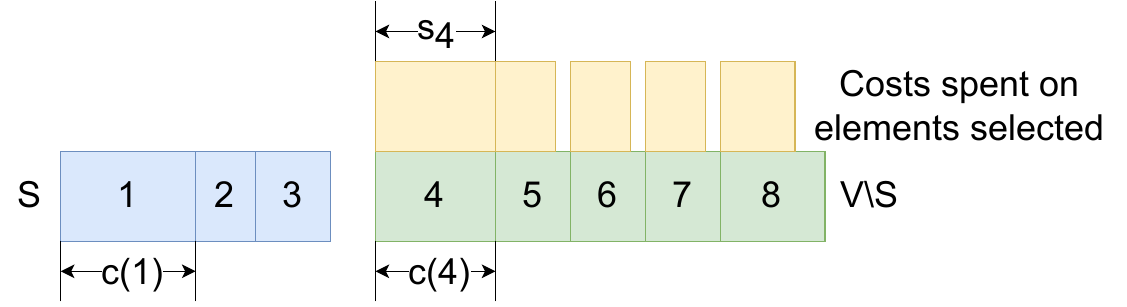}
        \caption{Slicing strategy producing $\Lambda^2$}
        \label{fig:ub2}
    \end{minipage}
\end{figure}

In Definition \ref{def:partitionnew}, $s_i$ can be interpreted as the total cost of the slices we pick involving element $i$, as illustrated in \Cref{fig:ub2}. 
Condition (ii) states that the total cost of all the slices picked is capped by $b$. 
Note that for each element $i \in S$, $d_S(i) = 0$. By condition (iii), we always have $v_i=0$ for each $i \in \{1,2,\dots,p\}$.

First, we prove the following property. 
\begin{lemma}
\label{lem:uboffT}
For any set $T \subseteq V \setminus S$ with $c(T) \le b$, $f_S(T)$ has a valid partition with respect to $S$.
\end{lemma}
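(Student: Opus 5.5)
We need to construct $v_i$ and $s_i$. The natural choice is to follow the MSMC argument: sort $T$ by index, $T=\{t_1<t_2<\dots<t_m\}$ (which matches non-ascending marginal density with respect to $S$, since indices in $V\setminus S$ are sorted that way). Set $T_j:=\{t_1,\dots,t_j\}$ and $T_0:=\emptyset$. For each $i\in V$, define
\[
v_i = f_{S\cup T_{j-1}}(t_j) \text{ if } i=t_j\in T, \qquad v_i=0 \text{ otherwise},
\]
and $s_i=c(i)$ if $i\in T$, $s_i=0$ otherwise. Clearly $v_i\ge 0$ by monotonicity and $s_i\le c(i)$.

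\textbf{Condition (i).} The sum $\sum_i v_i=\sum_{j=1}^m f_{S\cup T_{j-1}}(t_j)$ telescopes to $f(S\cup T)-f(S)=f_S(T)$.

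\textbf{Condition (ii).} We have $\sum_i s_i=c(T)\le b$ by hypothesis.

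\textbf{Condition (iii).} For $i\notin T$, both sides are $0$. For $i=t_j$ we have $d_S(t_j)\cdot s_{t_j}=f_S(t_j)\ge f_{S\cup T_{j-1}}(t_j)=v_{t_j}$, by the diminishing return property.

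\textbf{Condition (iv).} This is the only step requiring care. Fix $i$ and let $j$ be the largest index with $t_j\le i$, so that $T_j=T\cap V_i\subseteq V_i$. Then $\sum_{l\le i}v_l=\sum_{l=1}^{j} f_{S\cup T_{l-1}}(t_l)=f_S(T_j)$ by the same telescoping. We need $f_S(V_i)\ge f_S(T_j)$, i.e. $f(S\cup V_i)\ge f(S\cup T_j)$, which follows from monotonicity since $T_j\subseteq V_i$. If $i\le p$ or no such $j$ exists, the partial sum is $0\le f_S(V_i)$, also by monotonicity.

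The main point to double-check is the choice of ordering: ordering $T$ by index is what makes every prefix sum of the $v$'s equal to $f_S$ of a subset of $V_i$. The density ordering itself is not needed for this lemma. It only becomes relevant later, when comparing the resulting bound with $\Lambda^0$.
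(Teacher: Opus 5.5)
Your proof is correct and matches the paper's. You use the same index ordering of $T$, the same values $v_{t_j}=f_{S\cup T_{j-1}}(t_j)$, and the same telescoping and monotonicity arguments for (i) and (iv), except that you take $s_{t_j}=c(t_j)$ instead of the paper's tight choice $s_{t_j}=\frac{d_{S\cup T_{j-1}}(t_j)}{d_S(t_j)}\cdot c(t_j)$. This moves the diminishing-returns step from condition (ii) to condition (iii), and has the small advantage of never dividing by $d_S(t_j)$, which may be zero.
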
 
\begin{proof}[Proof Sketch]
We construct a valid partition of $f_S(T)$ to prove this lemma. 
Please refer to Appendix \ref{appendix_proof_of_lem:uboffT} for details.
\end{proof}

Lemma \ref{lem:uboffT} implies that the maximum value that has a valid partition with respect to $S$ is an upper bound on $\max_{T \subseteq V \setminus S,\, c(T) \le b} f_S(T)$.
We remark that Definition \ref{def:partitionnew} and Lemma \ref{lem:uboffT} actually do not require the remaining set $V \setminus S = \{p+1,p+2,\dots,n\}$ to be sorted in non-ascending order of marginal density with respect to $S$. 
On the other hand, having $V \setminus S$ sorted allows us to develop an efficient algorithm to compute the maximum value that has a valid partition.

To find this maximum value, 
for each element $i \in V \setminus S$, we would like to spend as little cost $s_i$ as possible while maximizing $v_i$.
By conditions (iii) and (iv) of Definition \ref{def:partitionnew}, in general, the cost to spend on each element $a_i$ should be $s_i = \frac{d_{S \cup V_{i-1}}(i)}{d_S(i)}\cdot c(i)$, so that the increase from $f_S(V_{i-1})$ to $f_S(V_{i})$ is the same as $d_S(i) \cdot s_i$.
Thus, we iterate through the sorted sequence $\langle p+1,p+2,...,n\rangle$ until encountering an element $t$ where the remaining budget $b - \sum_{i=p+1}^{t-1} \frac{d_{S \cup V_{i-1}}(i)}{d_S(i)}\cdot c(i) \le \frac{d_{S \cup V_{t-1}}(t)}{d_S(t)}\cdot c(t)$. 
For each element $i$ $(p+1 \le i < t)$, a cost of $s_i = \frac{d_{S \cup V_{i-1}}(i)}{d_S(i)}\cdot c(i)$ is spent on $i$ and we set $v_i = d_S(i) \cdot s_i = f_{S \cup V_{i-1}}(i)$.
For the element $t$, a cost of $s_t = b-\sum_{i=p+1}^{t-1}\frac{d_{S \cup V_{i-1}}(i)}{d_S(i)}\cdot c(i)$ is spent on $t$ and we set $v_t = d_S(t) \cdot s_t$.
For all the remaining elements $i$ ($1 \le i \le p$ or $t+1 \le i \le n$), we set $s_i = 0$ and $v_i = 0$.
Then, we define 
\begin{align}
    & 
    \Lambda^2(S) = \sum_{i=1}^{n} v_i = \sum_{i=p+1}^{t} v_i 
    = f_S(V_{t-1}) 
    + \left(b-\sum_{i=p+1}^{t-1}\frac{d_{S \cup V_{i-1}}(i)}{d_S(i)}\cdot c(i)\right)\cdot d_S(t). \label{eq:Lambda2}
\end{align}
The idea of constructing $\Lambda^2$ is illustrated in \Cref{fig:ub2}. 
Given a set $S$, the time complexity to compute $\Lambda^2(S)$ is $O(n \log n) = O(|V| \log |V|)$ due to the sorting of the elements.
Obviously, $\Lambda^2(S)$ has a valid partition as demonstrated by the cost and value sequences above.
Next, we prove that $\Lambda^2(S)$ is the maximum $v$ that has a valid partition and hence an upper bound on $\max_{T \subseteq V \setminus S,\, c(T) \le b} f_S(T) = \max_{T \subseteq V,\, c(T) \le b} f_S(T)$.

Notice that
    $\Lambda^2(S) 
    =\sum_{i=p+1}^{t} v_i
    = \sum_{i=p+1}^{t} d_S(i) \cdot s_i$.
Since $d_S(p+1) \ge d_S(p+2) \ge \cdots \ge d_S(n)$ and $\sum_{i=p+1}^{t}s_i = b$, we have $\sum_{i=p+1}^{t}d_S(i) \cdot s_i \le \sum_{i=p+1}^{r-1}d_S(i) \cdot c(i) + \left( b-\sum_{i=p+1}^{r-1}c(i) \right) \cdot d_S(r) = \Lambda^0(S)$, where $r$ is the lowest index satisfying $\sum_{i=p+1}^{r}c(i) > b$ as defined in \Cref{equ:upb1}. This implies that $\Lambda^2(S) \le \Lambda^0(S)$ and hence, $\Lambda^2$ is at least as good as $\Lambda^0$.

\begin{lemma}
\label{lem:maxvalid}
The maximum value $v$ that has a valid partition is given by $\Lambda^2(S)$. 
\end{lemma}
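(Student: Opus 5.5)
Since the construction preceding the statement already exhibits a valid partition of $\Lambda^2(S)$, the plan reduces to the upper bound. I will take an arbitrary valid partition $v_1,\dots,v_n$ of some $v$, with witnessing costs $s_1,\dots,s_n$, and show $v \le \Lambda^2(S)$. I will write $\sigma_i = \frac{d_{S\cup V_{i-1}}(i)}{d_S(i)}\cdot c(i)$ for the ``natural'' cost of element $i$, so that $d_S(i)\sigma_i = f_{S\cup V_{i-1}}(i)$, and so that $\sum_{j=p+1}^{i}\sigma_j d_S(j) = f_S(V_i)$. Elements with $d_S(i)=0$ (including all of $S$) force $v_i=0$ by (iii) and can be ignored. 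I will treat $\Lambda^2(S)$ as the value of the greedy ``fill'' in which element $i$ contributes density $d_S(i)$ over a cost interval of length $\sigma_i$, for $i=p+1,p+2,\dots$, truncated at total cost $b$.

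The key step is to show
\[ \sum_{i=p+1}^{m} v_i \le f_S(V_m) = \sum_{i=p+1}^{m} d_S(i)\sigma_i \]
for every prefix $m$, which is condition (iv). Combining (iii), $v_i\le d_S(i)s_i$, with $\sum_i s_i\le b$ and the non-increasing densities, I will argue by an exchange/majorization argument.

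\begin{itemize}
\item Let $w_i = \min\{v_i/d_S(i),\,\cdot\}$, and more precisely define $u_i = v_i/d_S(i) \le s_i$. Then $\sum_i u_i \le b$, and $v = \sum_i d_S(i) u_i$.
\item Condition (iv) states $\sum_{i\le m} d_S(i)u_i \le \sum_{i\le m} d_S(i)\sigma_i$ for all $m$.
\item We must show that $\sum_i d_S(i)u_i$ is at most the value $\Lambda^2$ of the greedy allocation $\sigma_{p+1},\dots,\sigma_{t-1},s_t$ with total cost $b$.
\end{itemize}

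Consider the cutoff index $t$. For the prefix up to $t-1$, (iv) gives $\sum_{i<t} d_S(i)u_i \le f_S(V_{t-1})$. For the remainder I will use Abel summation. With $D_i = d_S(i)$ non-increasing, write
\[ \sum_i D_i u_i = \sum_m (D_m - D_{m+1}) U_m \]
with weighted prefix sums. Comparing the two vectors directly is cleaner. Set $g_i = D_i\sigma_i$ for $i<t$, $g_t = D_t s_t$, and $g_i=0$ otherwise, and set $h_i = D_i u_i$. Then (iv) gives $\sum_{i\le m}h_i \le \sum_{i\le m} g_i$ for $m<t$, and the cost constraint gives $\sum_i h_i/D_i \le b = \sum_i g_i/D_i$.

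Now suppose $\sum h > \sum g$ and derive a contradiction. Since dividing by the non-increasing $D_i$ puts larger weights $1/D_i$ on later indices, excess mass of $h$ lying beyond the prefix where $g$ dominates must cost at least $1/D_t$ per unit. On the other hand, $g$'s mass on $[p+1,t]$ is spent at weights at most $1/D_t$. Formally, I will use
\[ \sum_i \frac{h_i}{D_i} \;\ge\; \sum_{i<t}\frac{h_i}{D_i} + \frac{1}{D_t}\sum_{i\ge t}h_i, \]
then lower-bound $\sum_{i<t} h_i/D_i$ using the prefix domination by $g$ and Abel summation on $1/D_i$ (which is non-decreasing). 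This reduces to
\[ \sum_{i<t} \frac{h_i}{D_i} - \sum_{i<t}\frac{g_i}{D_i} \ge \frac{1}{D_t}\Bigl(\sum_{i<t}h_i - \sum_{i<t}g_i\Bigr). \]
That inequality holds because the difference $g-h$ has non-negative prefix sums, and the weights, which increase up to $1/D_t$, are at most $1/D_t$. Adding this to the budget constraint yields $\sum_i h_i \le \sum_i g_i = \Lambda^2(S)$.

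The main obstacle is handling this Abel-summation step carefully, in particular the sign bookkeeping when $h$ exceeds $g$ on some prefixes but not others. A secondary obstacle is the edge case where the budget is never exhausted, i.e.\ no $t$ exists and $\Lambda^2 = f_S(V_n)$, in which case (iv) with $m=n$ gives the bound directly.
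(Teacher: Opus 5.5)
Your argument is correct, but it takes a different route from the paper. The paper argues by contradiction through an extremal choice. Assuming some $\tilde v > \Lambda^2(S)$ has a valid partition, it takes one of minimum total cost and, among those, the lexicographically last cost sequence $\tilde s$. It then locates the first index $q$ with $\tilde s_q \neq s_q$. In one case it lowers $\tilde s_q$ to $s_q$, giving a cheaper valid partition. In the other it shifts a little cost from the next positive-cost element $m>q$ onto $q$, giving a lexicographically later one.

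You instead prove $v \le \Lambda^2(S)$ directly. Your Abel summation amounts to adding the budget inequality multiplied by $D_t$ to the prefix inequalities (iv) for $m=p+1,\dots,t-1$ multiplied by $D_t(1/D_{m+1}-1/D_m)\ge 0$. This gives each $h_i$ with $i<t$ coefficient exactly $1$ and each $h_i$ with $i\ge t$ coefficient $D_t/D_i \ge 1$, while the right-hand side collapses to $\sum_i g_i=\Lambda^2(S)$. In effect, you exhibit an explicit dual certificate for the linear program \eqref{lnp:lbd 2}.

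Your route is shorter and non-extremal. It needs no compactness argument for the existence of a minimum-cost, lexicographically last partition, which the paper leaves implicit, and no perturbation step. It also makes clear that the sorted densities are exactly what make the multipliers nonnegative. The paper's exchange argument, by contrast, follows the greedy construction index by index and stays within the combinatorial language of Definition~\ref{def:partitionnew}.

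In a write-up, drop the stray definition of $w_i$ and the ``suppose $\sum h>\sum g$'' framing, since your argument is direct. Also state explicitly that discarding zero-density elements ensures $D_t>0$. The ``sign bookkeeping'' you worry about is a non-issue: the Abel step only uses that $g-h$ has nonnegative prefix sums up to $t-1$, which is exactly (iv).
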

\begin{proof}[Proof Sketch]
Let the cost sequence generated for computing $\Lambda^2(S)$ be $s_1,s_2,\dots,s_n$ and the corresponding value sequence be $v_1,v_2,\dots,v_n$. If $s_1,s_2,\dots,s_n$ do not use up the budget $b$, 
the claim is straightforward since $\sum_{i=1}^{n} v_i = \sum_{i=1}^{n} f_{S \cup V_{i-1}}(i) = f_S(V_n)$ must be the maximum possible due to property (iv) of Definition \ref{def:partitionnew} (by setting $i=n$). Now assume that $\sum_{i=1}^{n}s_i = b$. We prove the lemma by contradiction. Let $\tilde{v}$ be the maximum value that has a valid partition and its cost sequence be $\tilde{s}_1,\tilde{s}_2,\dots,\tilde{s}_n$. Assume that $\tilde{v} > \Lambda^2(S)$. We check the difference between the two sequences and find the lowest-indexed element $a_q$ where $s_q \neq \tilde{s}_q$. We then derive contradictions for the cases of $\tilde{s}_q > s_q$ and $\tilde{s}_q < s_q$ respectively to complete the proof. Please refer to Appendix \ref{appendix_proof_of_lem:maxvalid} for details.
\end{proof}

\subsubsection{$\Lambda^{2}$ in linear program presentation}
\label{sec:lbd 2}
Like $\Lambda^1$, we can write the calculation of $\Lambda^2$ as a linear program shown in \Cref{lnp:lbd 2}.
\begin{align}
\label{lnp:lbd 2}
    \max & \quad \sum_{i \in V \setminus S} f_S(i) \cdot x_i\\
    \text{subject to } & \quad \forall i \in V, \quad \sum_{j = 1}^{i} f_S(j) \cdot x_j \le f_S(V_i), \nonumber \\
    & \quad \sum_{i \in V} c(i) \cdot x_i \le b. \nonumber 
\end{align}
    We prove in Appendix \ref{appendix_lp_lambda2} that the optimum of \Cref{lnp:lbd 2} is equal to 
    $\Lambda^2(S)$.

Though we have assumed $S=\{1,2,\dots,p\}$ and $V \setminus S=\{p+1,p+2,\dots,n\}$,
\Cref{lnp:lbd 2} actually does not require the elements of $S$ to be arranged before the elements of $V \setminus S$ in the ground set. The indexes of their elements can be arbitrarily interleaved. 
This is because in the first constraint of \Cref{lnp:lbd 2}, $f_S(j)=0$ for any element $j \in S$ and thus it would not increase the value of the left-hand side. On the right-hand side, $f_S(V_i)$ never decreases as $i$ increases. 
Thus, if the first constraint holds for all $i \in V \setminus S$, it would also hold for all $i \in S$. This 
observation will be useful when later we introduce a unified upper bound based on multiple base sets $S$, 
    where it is difficult to separate the indexes of $S$ and $V \setminus S$ for all base sets. 

\subsection{Combining removing and slicing strategies}
    We can combine the removing strategy and slicing strategy, bringing forth a tighter upper bound $\Lambda^3$. Given a set $S$, we denote the upper bound $\Lambda^2(S)$ generated with a budget $x \in [0, b]$ as $\Phi(x)$. The new upper bound $\Lambda^3$ is defined as follows:
\begin{equation*}
    \Lambda^3(S) = \max\limits_{x \in [0, c(S)]}(\Phi(b-c(S)+x)-G_-(x)),
\end{equation*}
    where $G_-(x)$ is defined in \Cref{eq:gminus}.

    According to the property of $\Lambda^2(S)$, 
    $\Phi(x)$ is a non-decreasing function and $\Phi(x) \ge \max\limits_{T \subseteq V \setminus S,\, c(T) \le x}f_S(T)$. 
    
    Letting $x = \max\{c(OPT \cup S) - b, 0 \}$, similar to the derivation in $\Lambda^1(S)$, we have
\begin{align*}
    & \Phi(b - c(S) + x) - G_-(x) \\
    & \ge  \Phi(c(OPT \cup S) - c(S)) 
    - G_-(c(OPT \cup S) - c(OPT)) \\ 
    & =  \Phi(c(OPT \setminus S)) - G_-(c(S\setminus OPT)) \\ 
    & \ge  f_S(OPT \setminus S) - f_{OPT}(S \setminus OPT) \\
    & =  f(OPT) - f(S),
\end{align*}
    which implies that $f(S) + \Lambda^3(S) \ge f(OPT)$.
    
    In addition, we have
\begin{align*}
    \Lambda^2(S) &= \Phi(b) 
    \ge \max\limits_{x \in [0, c(S)]}\Phi(b-c(S)+x) 
    \\ &
    \ge \max\limits_{x \in [0, c(S)]}(\Phi(b-c(S)+x)-G_-(x)) 
    = \Lambda^3(S),
\end{align*}
    which shows that $\Lambda^3$ is at least as good as $\Lambda^2$. 

The calculation of $\Lambda^3$ can also be written as a linear program by adding the removing item to \Cref{lnp:lbd 2}.
By similar arguments to the computational complexity of $\Lambda^1(S)$ in \Cref{sec:lbd 1}, the time complexity to compute $\Lambda^3(S)$ is also $O(n \log n) = O(|V| \log |V|)$, the same as $\Lambda^{0}(S)$, $\Lambda^{1}(S)$ and $\Lambda^{2}(S)$. Please refer to Appendix \ref{appendix_lbd_3_complexity} for details.

\subsection{Enhancing bounds with multiple base sets}
\label{sec:augment}
    If we calculate an upper bound with one base set $S$ only
    (e.g., an empty set or the output solution of an algorithm), 
    the result may not be satisfying. To tighten the bound, we can compute multiple upper bounds with different base sets (such as all the intermediate sets generated by a greedy algorithm for the MSMK problem) and choose the lowest upper bound obtained as the final result. 
    This enhancement can be conducted on all the proposed new upper bounds ($\Lambda^1$ to $\Lambda^3$). We denote the final result for such an enumeration based on $\Lambda^i$ as $\Lambda^{i+}$, i.e., $\Lambda^{i+} = \min_{S \in \{S_1, S_2, \dots\}} f(S) + \Lambda^i(S)$.
    
    In the above approach, one upper bound is computed for each base set separately. Alternatively, we can compute a shared upper bound for multiple base sets, 
    producing an even tighter bound.

\subsubsection{Unified augmentation for $\Lambda^0$ and $\Lambda^1$}

    We first construct a linear program as follows: 
\begin{align}
\label{lnp:lfomsm}
    \max & \quad z \\
    \text{subject to} & \quad \forall S \subseteq V, \quad z \le f(S) + \Lambda(S, \mathbf{x}), \nonumber \\
    & \quad \sum_{i \in V} c(i) \cdot x_i \le b, \nonumber
\end{align}
    where $\mathbf{x} = \langle x_1, x_2, \dots, x_n \rangle \in [0,1]^n$ is an indicator vector. Let $\mathbf{x^{OPT}}$ be the indicator vector of the optimal solution $OPT$ to the MSMK problem (where $x^{OPT}_i = 1$ if $i \in OPT$, and $x^{OPT}_i = 0$ otherwise). If function $\Lambda(S, \mathbf{x})$ satisfies $f(S) + \Lambda(S, \mathbf{x^{OPT}}) \ge f(OPT)$ for every set $S\subseteq V$, it is easy to see that the optimal value of \Cref{lnp:lfomsm} must be no less than $f(OPT)$. 

    \Cref{lnp:lfomsm} cannot be solved efficiently because there is an exponential number of possible sets $S$ (or constraints).    
    However, we can relax \Cref{lnp:lfomsm} by considering the constraints relevant to a collection of base sets $\{S_1, S_2, \dots, S_m\}$ only.
\begin{align}
\label{lnp:msmr}
    \max & \quad z \\
    \text{subject to} & \quad \forall i \in \{1,2,\dots,m\}, \quad z \le f(S_i) + \Lambda(S_i, \mathbf{x}), \nonumber \\
    & \quad \sum_{i \in V} c(i) \cdot x_i \le b. \nonumber
\end{align}
The optimal value of this linear program is not less than that of \Cref{lnp:lfomsm}, and thus guaranteed to be an upper bound on $f(OPT)$.
We call this procedure the ``unified'' augmentation method, since it forces each base set $S_i$ to share the same indicator vector $\mathbf{x}$.

For $\Lambda^0$ and $\Lambda^1$, we can design $\Lambda(S, \mathbf{x})$ according to the objective functions in their linear program representations (\Cref{lnp:lbd 0} and \Cref{lnp:lbd 1}). 

Specifically, we define $\Lambda^0(S, \mathbf{x})$ and $\Lambda^1(S, \mathbf{x})$ as follows:
\begin{equation}
    \Lambda^0(S, \mathbf{x}) = \sum_{i \in V \setminus S} f_S(i) \cdot x_i,
    \label{equ:lambda0}
\end{equation}
\begin{equation}
    \Lambda^1(S, \mathbf{x}) = \sum_{i \in V\setminus S} f_S(i) \cdot x_i - \sum_{i \in S} f_{V \setminus \{i\}}(i) \cdot (1-x_i).
\label{equ:lambda1}    
\end{equation}
Apparently, $\Lambda^0(S, \mathbf{x}) \ge \Lambda^1(S, \mathbf{x})$. For any set $S\subseteq V$, we have
\begin{eqnarray}
    \lefteqn{
    f(S) + \Lambda^1(S, \mathbf{x^{OPT}})
    } \nonumber \\ 
    &=& f(S) + \sum_{i \in V\setminus S} f_S(i) \cdot x^{OPT}_i - \sum_{i \in S} f_{V \setminus \{i\}}(i) \cdot (1-x^{OPT}_i) \nonumber \\
    &=& f(S) + \sum_{i \in OPT \setminus S} f_S(i) - \sum_{i \in S \setminus OPT} f_{V \setminus \{i\}} (i) \nonumber \\
    &\ge& f(S) + f_S(OPT) - f_{OPT}(S) 
    = f(OPT). \label{equ:lbd1}
\end{eqnarray}

Thus, we can instantiate $\Lambda(S, \mathbf{x})$ with $\Lambda^0(S, \mathbf{x})$ or $\Lambda^1(S, \mathbf{x})$ in \Cref{lnp:msmr}, and solve \Cref{lnp:msmr} to obtain an upper bound on $f(OPT)$. 
We denote the resulting bounds as $\Lambda^{0*}$ and $\Lambda^{1*}$.

\subsubsection{Unified augmentation for $\Lambda^2$ and $\Lambda^3$}
\label{sec:unified23}
To calculate a unified upper bound with multiple base sets $S_1, S_2, \dots, S_m$ for $\Lambda^2$, we construct the following linear program, where the second and third constraints correspond to conditions (iii) and (iv) of Definition \ref{def:partitionnew}, and $V_j := \{1,2,\dots,j\}$: 
\begin{align}
\label{lnp:unified lbd 2}
    \max & \quad z \\
    \text{subject to} & \quad \forall i \in \{1,2,\dots,m\}, \quad z \le f(S_i) +\sum_{j \in V} v^i_j, \nonumber \\
    & \quad \forall i \in \{1,2,\dots,m\} \text{ and } j \in V, \quad v^i_j \le f_{S_i}(j) \cdot x_j, \nonumber \\
    & \quad \forall i \in \{1,2,\dots,m\} \text{ and } j \in V, \quad \sum_{k=1}^{j} v^i_k \le f_{S_i}(V_j), \nonumber \\
    & \quad \sum_{i \in V} c(i) \cdot x_i \le b. \nonumber 
\end{align}

\Cref{lnp:unified lbd 2} can be viewed as an extension of \Cref{lnp:lbd 2} to multiple base sets (recall that \Cref{lnp:lbd 2} allows the indexes of the elements in $S$ and $V \setminus S$ to be interleaved arbitrarily).
It is easy to prove that the optimum of \Cref{lnp:unified lbd 2} dominates $f(OPT)$, since setting $\langle x_1, x_2, \dots, x_n \rangle = \mathbf{x^{OPT}}$ (the indicator vector of $OPT$) and $z = f(OPT)$ can satisfy all the constraints therein. 

To calculate a unified upper bound for $\Lambda^3$, we can further add the removing item to the first constraint of \Cref{lnp:unified lbd 2} as follows:
\begin{align}
\label{lnp:unified lbd 3}
    \max & \quad z \\
    \text{subject to} & \quad \forall i \in \{1,2,\dots,m\}, 
    \quad z \le f(S_i) +\sum_{j \in V} v^i_j - \sum_{j \in S_i} f_{V \setminus \{j\}}(j) \cdot (1-x_j), \nonumber \\
    & \quad \forall i \in \{1,2,\dots,m\} \text{ and } j \in V, \quad v^i_j \le f_{S_i}(j) \cdot x_j, \nonumber \\
    & \quad \forall i \in \{1,2,\dots,m\} \text{ and } j \in V, \quad \sum_{k=1}^{j} v^i_k \le f_{S_i}(V_j), \nonumber \\
    & \quad \sum_{i \in V} c(i) \cdot x_i \le b. \nonumber
\end{align}

\Cref{lnp:unified lbd 3} can be viewed as an extension of \Cref{lnp:lbd 3} to multiple base sets. 
Similarly, the optimum of \Cref{lnp:unified lbd 3} dominates $f(OPT)$, since setting $\langle x_1, x_2, \dots, x_n \rangle = \mathbf{x^{OPT}}$ 
and $z = f(OPT)$ can satisfy all the constraints therein. 

We denote the upper bounds obtained from Equations \eqref{lnp:unified lbd 2} and \eqref{lnp:unified lbd 3} as $\Lambda^{2*}$ and $\Lambda^{3*}$ respectively.

\section{Experiments}
\label{sec:exp}
    We carry out experiments on three different applications to demonstrate the advantage of our proposed bounds ($\Lambda^{1}$, $\Lambda^{2}$, $\Lambda^{3}$ and their augmentations) over the previous bounds $\Lambda^{0}$ and $\Lambda^{0+}$ in \cite{tang2021revisiting}.
    The experiments are conducted on a Windows machine with a 3.8GHz Intel Xeon W-2235 CPU and 32GB RAM with code written in Python.

\noindent \textbf{Feature selection.} Feature selection is a procedure widely used in machine learning~\cite{amiridi2021information,bao2022submodular,das2011submodular}. Given a feature set $V$ which contains all features that a data point in a training set may have, the goal is to select a subset $S \subseteq V$, with the highest utility subject to a limited budget, to construct a classification model. We adopt the following entropy function~\cite{balkanski2021instance} to determine the utility value of a feature set $S \subseteq V$: 
\begin{equation*}
    f(S) = -\sum_{x \in X_S}\sum_{y \in Y}p(x, y)\log p(x,y),
\end{equation*}
where $X_S$ is the feature matrix indexed by $S$, 
$Y$ is the set of labels, and $p(x,y)$ is the joint distribution on $(X_S,Y)$.
We experiment with \textbf{Adult Income}, a real-world dataset from~\cite{misc_adult_2}. This dataset contains 32561 individuals with a variety of features. We retrieve 111 binary features from the dataset as in ~\cite{kazemi2018scalable} and then randomly generate 20 ground sets each containing 100 features 
by uniformly sampling from these 111 binary features.

\noindent \textbf{Maximum coverage.} Maximum coverage is a classical optimization problem with numerous practical applications, such as influence maximization~\cite{tang2018online,kempe2003maximizing} and sensor placement~\cite{krause2008near,leskovec2007cost}. It addresses the challenge of using limited resources to achieve the highest utility. 
Given a graph $G=(V, E)$ where each vertex $v \in V$ has an associated cost $c(v)$, the objective is to find a vertex subset $S \subseteq V$ with limited total cost that maximizes a coverage function:  
\begin{equation*}
    f(S) = |N_G(S)|,
\end{equation*}
where $N_G(S)$ is the graph neighborhood function that returns a set containing all vertices in $S$ and all vertices adjacent to any vertex in $S$ in the graph $G$. 
We experiment with \textbf{ego-facebook} and \textbf{com-youtube}, two real-world datasets from~\cite{snapnets}. \textbf{ego-facebook} contains 4039 nodes and 88234 edges; \textbf{com-youtube} (top 5000 communities) contains 39841 nodes and 224234 edges. 
For each dataset, we randomly construct 20 ground sets each containing 1000 nodes by uniformly sampling from the dataset.

\noindent \textbf{Revenue maximization.} In social advertising, we would like to choose seed users who will advertise products to their neighbors on social networks. A social network is modeled by a graph $G=(V,E)$, where each node $v \in V$ represents a user, and each edge $(u,v) \in E$ has a weight $w_{uv}$ describing the influence of $u$ on $v$. Each user $v$ will be paid a cost $c(v)$ for advertising products if chosen as a seed user. The goal is to select a subset $S \subseteq V$ of seed users within a budget to maximize product revenue.
We adopt the expected product revenue function defined in ~\cite{breuer2020fast}:
\begin{equation*}
    f(S) = \sum_{v\in V}\Big(\sum_{u \in S}w_{uv}\Big)^{0.9},
\end{equation*}
where 
the exponent $0.9$ measures diminishing returns.
We experiment with a real-world dataset \textbf{Caltech36} from~\cite{nr} containing 769 users. We randomly generate 20 ground sets each containing 100 nodes by uniformly sampling from the dataset.

\noindent \textbf{Cost and budget setting.}
All the datasets 
lack a cost function. When conducting the experiments for each ground set, we randomly assign a cost to each element 
in three different ways: (a) all element costs are generated from $U(1,5)$, where $U(x,y)$ represents a uniform distribution between $x$ and $y$; 
(b) 80\% element costs are generated from $U(1,5)$ and 20\% element costs are generated from $U(5,20)$ (most costs are small while some are large);
(c) 80\% element costs are generated from $U(1,5)$ and 20\% element costs are generated from $U(0,1)$ (most costs are large while some are small).
Similar performance trends are observed for these three distributions.
We present the results for (a) here and defer the results for (b) and (c) to Appendix \ref{sec:add_cost_setting}.
For cost setting (a), we conduct experiments for all ground sets generated with different budgets from 6 to 40 (step size 1). 

\noindent \textbf{Algorithms and intermediate sets.}
For each problem instance, we run 
the modified greedy algorithm \textbf{MGreedy} 
\cite{wolsey1982maximising} and the \textbf{Greedy+Max} algorithm \cite{yaroslavtsev2020bring}.
The experimental results for the two algorithms show similar trends. Due to space limitations, we focus on presenting the results of \textbf{MGreedy} in this paper.
\textbf{MGreedy} starts with an empty solution set and employs a greedy heuristic.
In each iteration, 
the greedy heuristic finds the element with the highest marginal density among all elements that can fit into the remaining budget, and adds it to the solution set. 
The procedure repeats until no more element can be added to the solution set $S$ due to budget violation. \textbf{MGreedy} then finds an element $v^*$ which maximizes $f(\{v^*\})$ and compares it with $f(S)$ and outputs the one with a larger function value.
The approximation factor of \textbf{MGreedy} is between 0.427 and 0.42945~\cite{Feldman2023,Kulik2021}.

We calculate the upper bounds $\Lambda^{0}$, $\Lambda^{1}$, $\Lambda^{2}$, $\Lambda^{3}$ using the empty set $\emptyset$ or the algorithm's output solution $S$ as the base set, and their enumerated and unified augmentations $\Lambda^{i+}$ and $\Lambda^{i*}$
based on all intermediate sets generated by the algorithm 
(i.e., those produced by the greedy heuristic 
as well as the singleton set $\{v^*\}$). 
$\Lambda^i$ and $\Lambda^{i+}$ are calculated by the formulas 
presented. $\Lambda^{i*}$ is calculated by solving the linear program presented, for which we use 
the SciPy library in Python. 
Then, we divide the \textbf{MGreedy} solution by the upper bound to derive its approximation guarantee.
For each dataset, there are 700 problem instances (20 ground sets $\times$ 35 different budgets from 6 to 40). 
We present the box plot to illustrate the experimental results across all instances.
We use a common box plot showing the first quartile (25th percentile) and the third quartile (75th percentile) as a box, and the boundaries of the whiskers based on 1.5 times the interquartile range 
(the distance between the third and first quartiles).
In addition, the green triangle presents the mean value. 

\begin{figure}[!t]

\centering
\includegraphics[
width=0.7\textwidth]{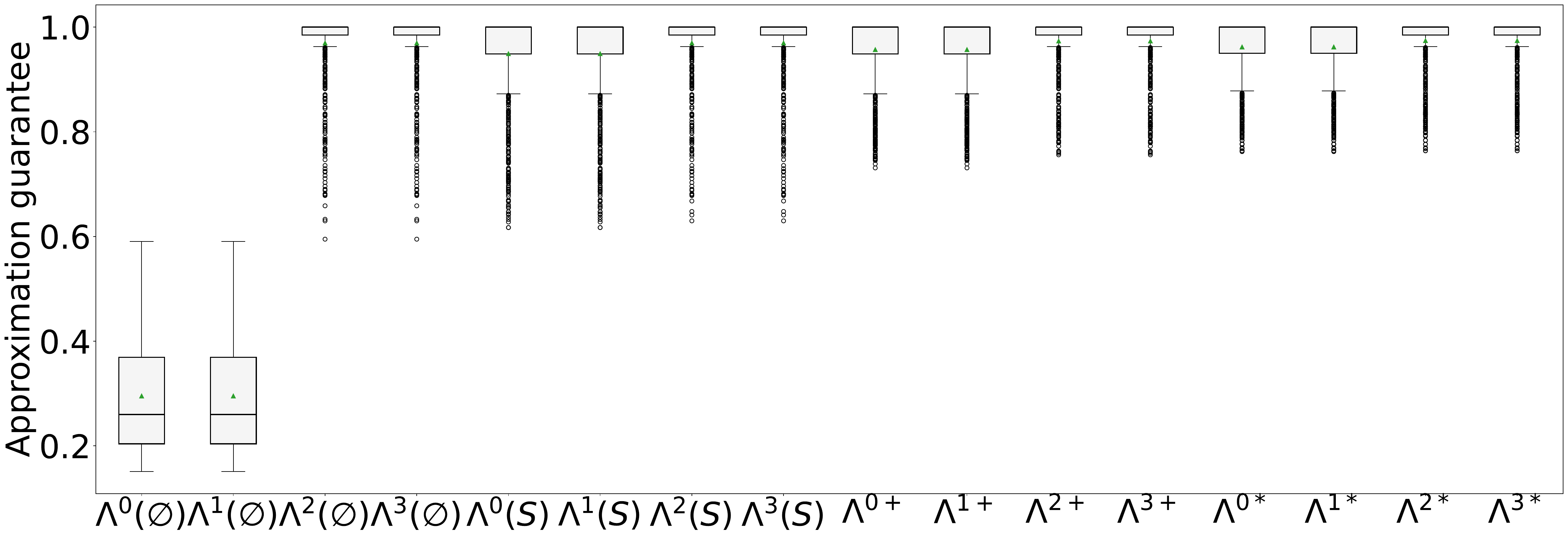} \\
(a) Adult Income 

\centering
\includegraphics[
width=0.7\textwidth]{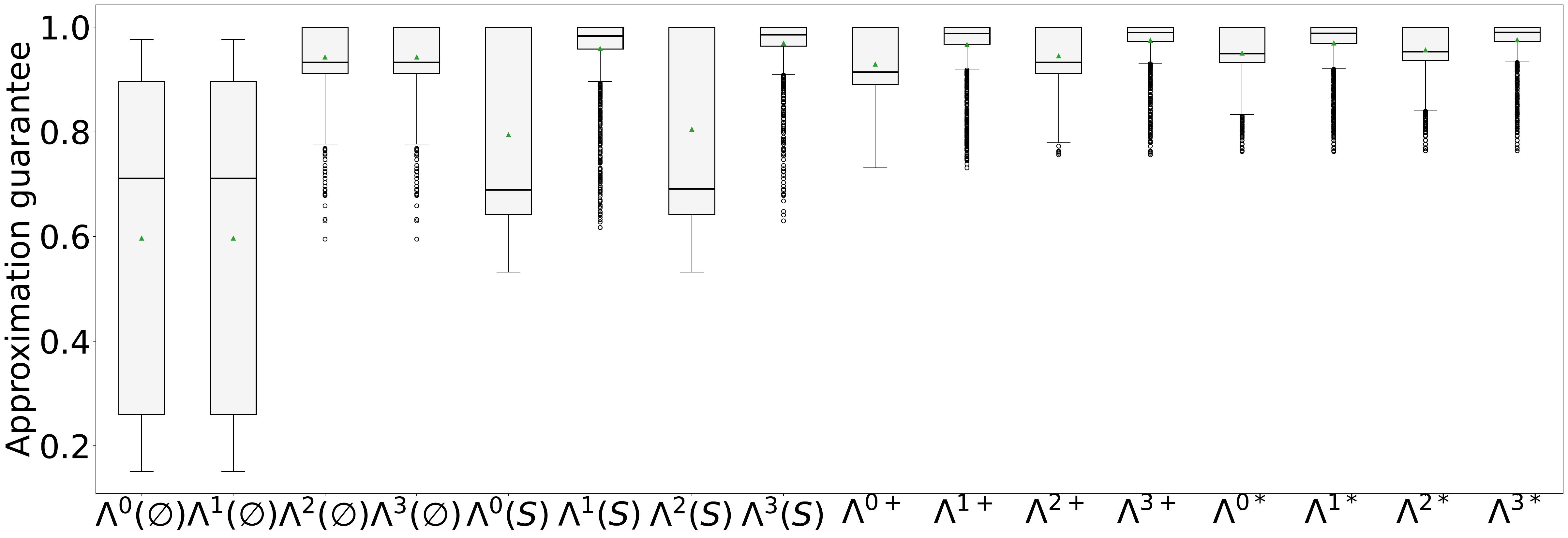} \\
(b) Caltech36 

\centering
\includegraphics[
width=0.7\textwidth]{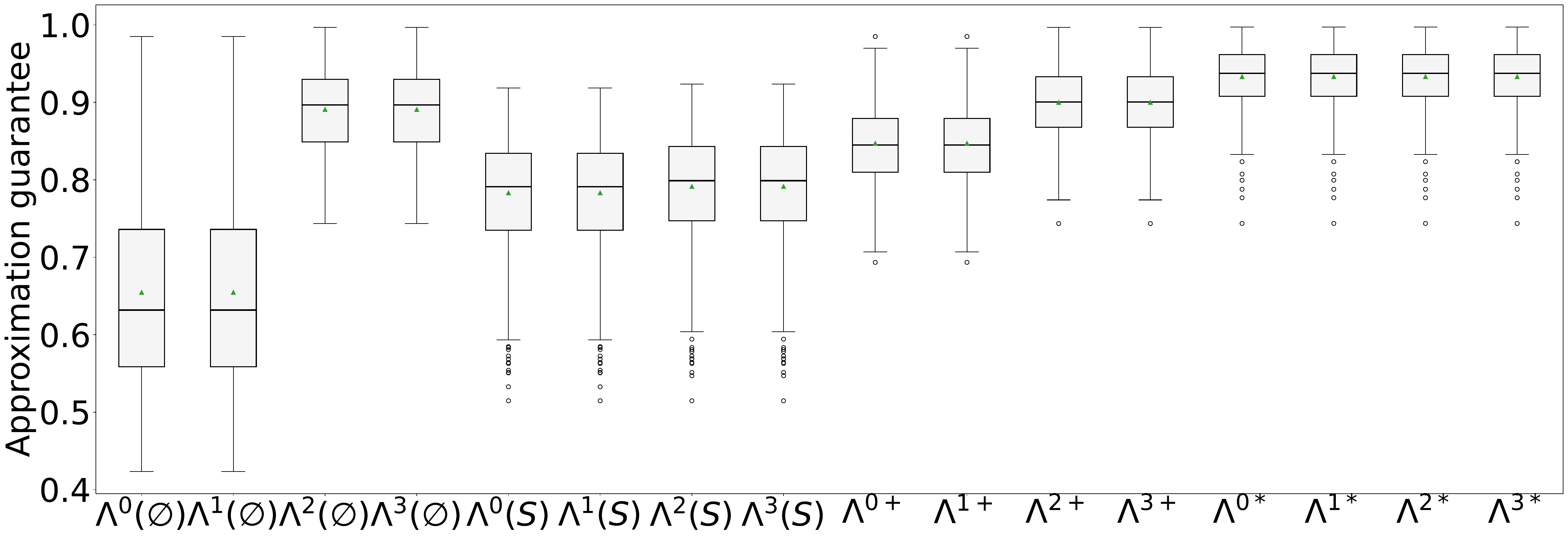} \\
(c) ego-facebook 

\centering
\includegraphics[
width=0.7\textwidth]{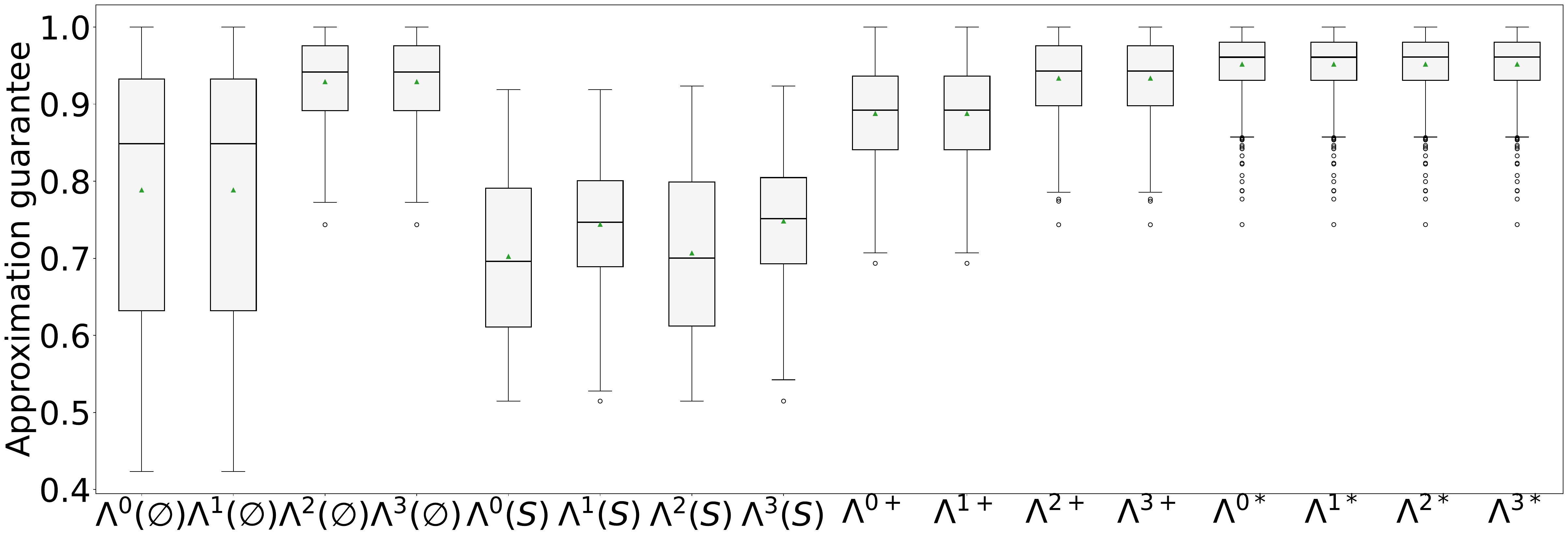} \\
(d) com-youtube 

\caption{Actual approximation guarantee plots for different upper bounds} 
\label{fig:frl}
\end{figure}

\Cref{fig:frl} presents the approximation guarantees derived from different upper bounds for the \textbf{MGreedy} algorithm.
We can make the following observations from the results.
First, data-dependent upper bounds typically certify much higher approximation guarantees than the worst-case theoretical result (recall that \textbf{MGreedy}'s approximation factor is between 0.427 and 0.42945). The unified upper bounds $\Lambda^{i*}$ confirm that \textbf{MGreedy} solutions are quite close to optimal (within 10\% in most cases).
Second, $\Lambda^{2}/\Lambda^{2+}/\Lambda^{2*}$ are significantly tighter than $\Lambda^{0}/\Lambda^{0+}/\Lambda^{0*}$ for most problem instances, which demonstrates the advantage of the slicing strategy.
$\Lambda^{3}/\Lambda^{3+}/\Lambda^{3*}$ (resp. $\Lambda^{1}/\Lambda^{1+}/\Lambda^{1*}$) are at least as good as $\Lambda^{2}/\Lambda^{2+}/\Lambda^{2*}$ (resp. $\Lambda^{0}/\Lambda^{0+}/\Lambda^{0*}$) as proved. As seen from \Cref{fig:frl}(b), the augmentations $\Lambda^{3+}/\Lambda^{3*}$ (resp. $\Lambda^{1+}/\Lambda^{1*}$) are much tighter than $\Lambda^{2+}/\Lambda^{2*}$ (resp. $\Lambda^{0+}/\Lambda^{0*}$) for the \textbf{Caltech36} dataset, which will be further analyzed below. 
Third, upper bounds calculated with only one base set $\emptyset$ or $S$ can be rather loose (e.g., $\Lambda^0(\emptyset)$ and $\Lambda^1(\emptyset)$ for the \textbf{Adult Income} dataset). The relative performance of $\Lambda^i(\emptyset)$ and $\Lambda^i(S)$ varies with the dataset. Upper bounds calculated with multiple base sets are normally considerably tighter. The unified upper bounds $\Lambda^{i*}$ are remarkably tighter than the enumerated upper bounds $\Lambda^{i+}$ in many cases. 

We present two reasons accounting for the superior performance of the removing strategy
on the \textbf{Caltech36} dataset.

First, an element $e$ in the \textbf{Caltech36} dataset is more likely to have a higher cutoff density $d_{V \setminus \{e\}}(e)$
compared with the other three datasets, which makes the removing strategy more effective. In \Cref{fig:tmm2 10}, we present the $d_{V \setminus \{e\}}(e)$ values of elements from different ground sets. The plot contains data from 20000/20000/2000/2000 elements from the 20 ground sets of the \textbf{ego-facebook}/\textbf{com-youtube}/\textbf{Caltech36}/\textbf{Adult Income} datasets, respectively. 

\begin{figure}[!t]
    \centering
        \includegraphics[width=0.5\linewidth]{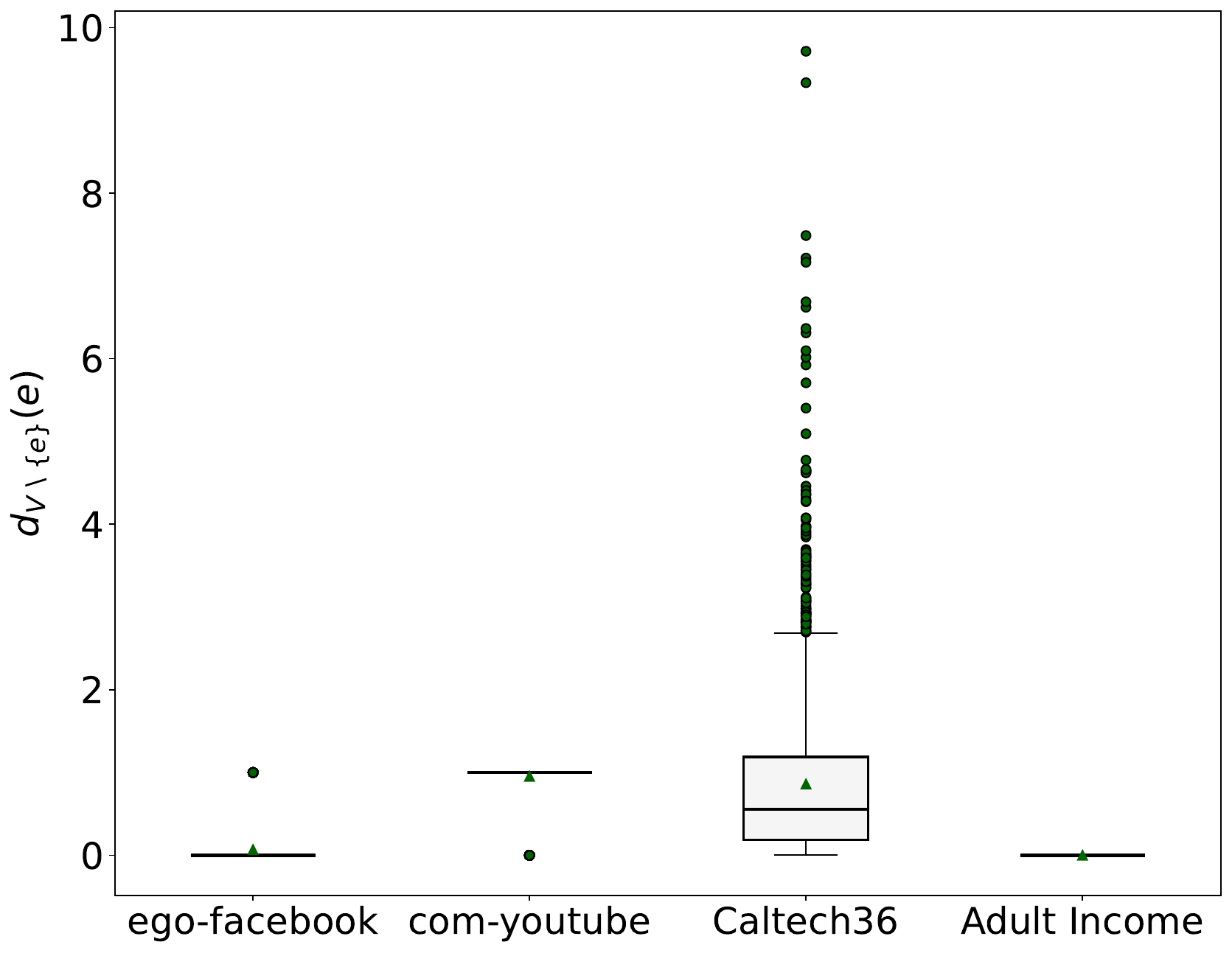}
        \caption{Cutoff densities of elements in four datasets}
        \label{fig:tmm2 10}
\end{figure}

Second, recall that the enumerated augmentation $\Lambda^{i+}$ is the lowest upper bound among all those obtained from taking the intermediate sets of \textbf{MGreedy} as base sets. If $\Lambda^{1+}$ or $\Lambda^{3+}$ is given by the upper bound obtained from an empty set (the initial greedy solution), the removing strategy does not further tighten the bound because
\begin{align*}
    \Lambda^{1}(\emptyset)
    &= \max\limits_{x \in [0, c(\emptyset)]}\left\{G_+(b-c(\emptyset)+x) - G_-(x)\right\} \\ 
    &= G_+(b-c(\emptyset)+0) - G_-(0) \\
    & =  G_+(b) \\
    & =  \Lambda^{0}(\emptyset),
\end{align*}
and a similar argument also holds for $\Lambda^{3+}$. 
$\Lambda^{3+}$ is never equal to $\Lambda^{3}(\emptyset)$ for the \textbf{Caltech36} dataset, while it is common for the other three datasets as  illustrated in \Cref{tab:empty set count}. 
This observation explains the reason why the \textbf{com-youtube} dataset fails to have a better $\Lambda^{3+}$ than $\Lambda^{2+}$ despite its 
cutoff densities $d_{V \setminus \{e\}}(e)$ not being close to 0. 
For similar reasons, the unified augmentations $\Lambda^{1*}/\Lambda^{3*}$ have the same performance tendencies as the enumerated augmentations $\Lambda^{1+}/\Lambda^{3+}$.  

\begin{table}[htbp]
\caption{Problem instances where $\Lambda^{3+}=\Lambda^{3}(\emptyset)$}
\label{tab:empty set count}
\begin{center}
\begin{tabular}{c|c|c|c}
\hline
Dataset & \# of $\Lambda^{3+}=\Lambda^{3}(\emptyset)$ & Total \# & Percentage \\
\hline
Adult Income & 658 & 700 & 94.0\% \\
\hline
Caltech36 & 0 & 700 & 0.0\% \\
\hline
ego-facebook & 396 & 700 & 56.8\% \\
\hline
com-youtube & 700 & 700 & 100\% \\
\hline
\end{tabular}
\end{center}
\end{table}

To evaluate the computational efficiency of our upper bounds, we present the box plots of computational time 
in \Cref{fig:T}.
The computational times of $\Lambda^{2}/\Lambda^{2+}$ and $\Lambda^{0}/\Lambda^{0+}$ are nearly the same for all the datasets except \textbf{Adult Income}.\footnote{Detailed examinations show that the number of elements visited in $V \setminus S$ for computing $\Lambda^{2}/\Lambda^{2+}$ is much larger than that for computing $\Lambda^{0}/\Lambda^{0+}$ for the \textbf{Adult Income} dataset, while these numbers are similar for other datasets.} This demonstrates the computational efficiency of the slicing strategy. 
The computational times of $\Lambda^{1}/\Lambda^{1+}$ and $\Lambda^{0}/\Lambda^{0+}$ are also generally close for all the datasets. This shows that the removing strategy does not introduce much additional computational overhead.
Comparing enumerated and unified augmentations, the computational times of $\Lambda^{0*}/\Lambda^{1*}$ are similar to $\Lambda^{0+}/\Lambda^{1+}$, whereas the computational times of $\Lambda^{2*}/\Lambda^{3*}$ can be much higher than $\Lambda^{2+}/\Lambda^{3+}$. This is because the number of constraints is independent of the ground set size in the linear program of unified augmentations $\Lambda^{0*}/\Lambda^{1*}$, while it is proportional to the ground set size in the linear program of unified augmentations $\Lambda^{2*}/\Lambda^{3*}$. 

\begin{figure}[!t]

\centering
\includegraphics[
width=0.7\textwidth]{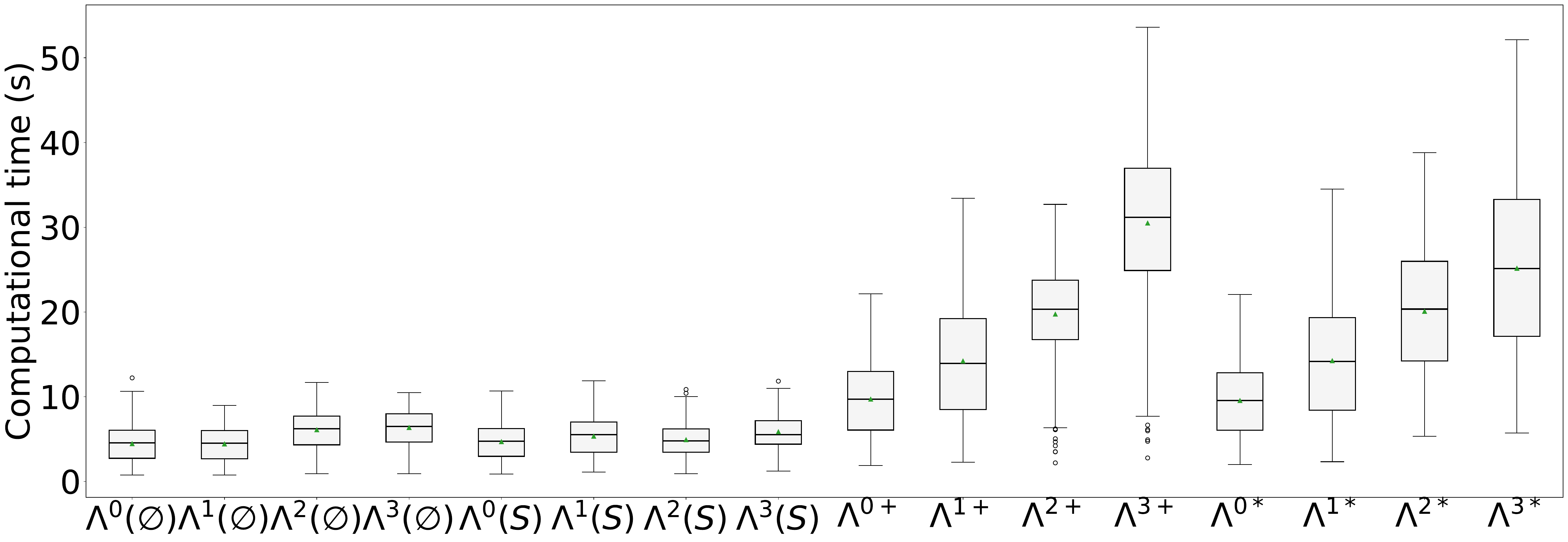} \\
(a) Adult Income 

\centering
\includegraphics[
width=0.7\textwidth]{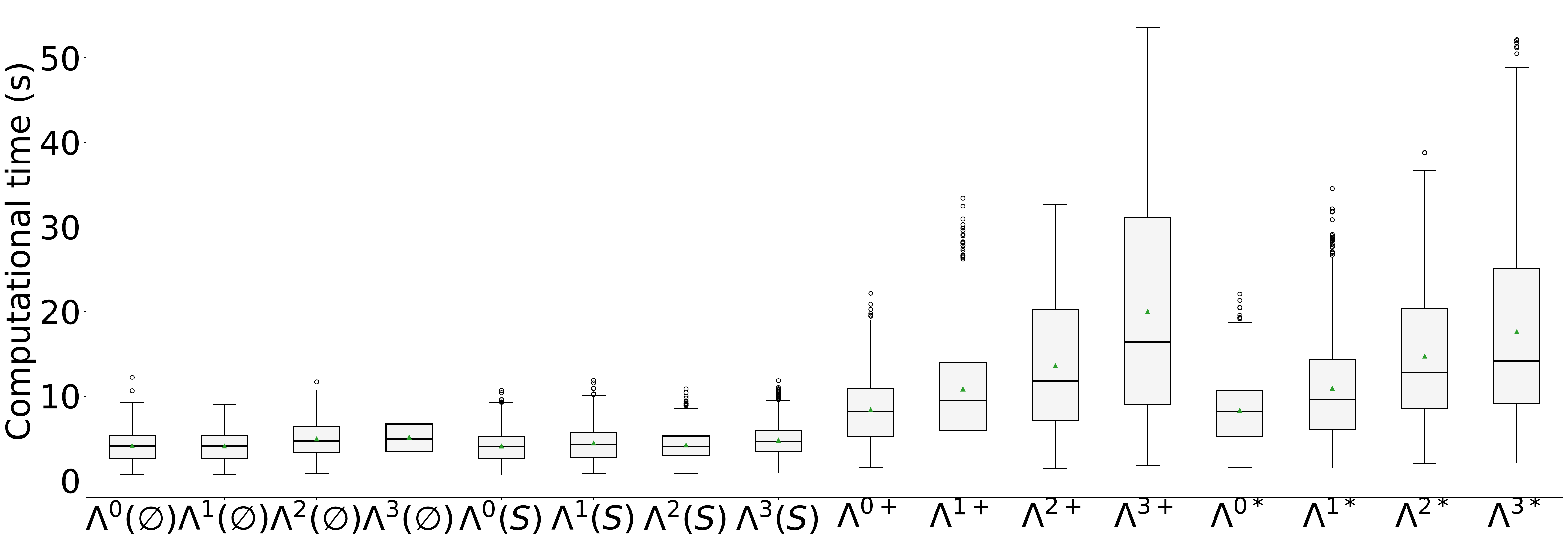} \\
(b) Caltech36 

\centering
\includegraphics[
width=0.7\textwidth]{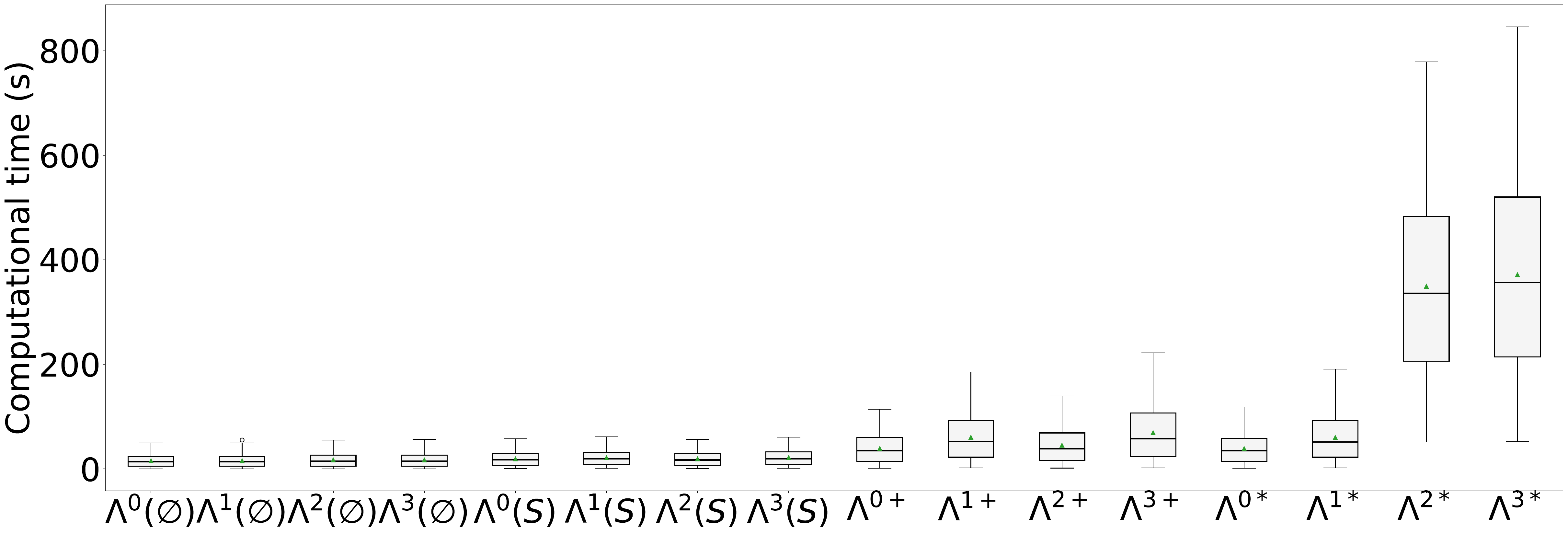} \\
(c) ego-facebook 

\centering
\includegraphics[
width=0.7\textwidth]{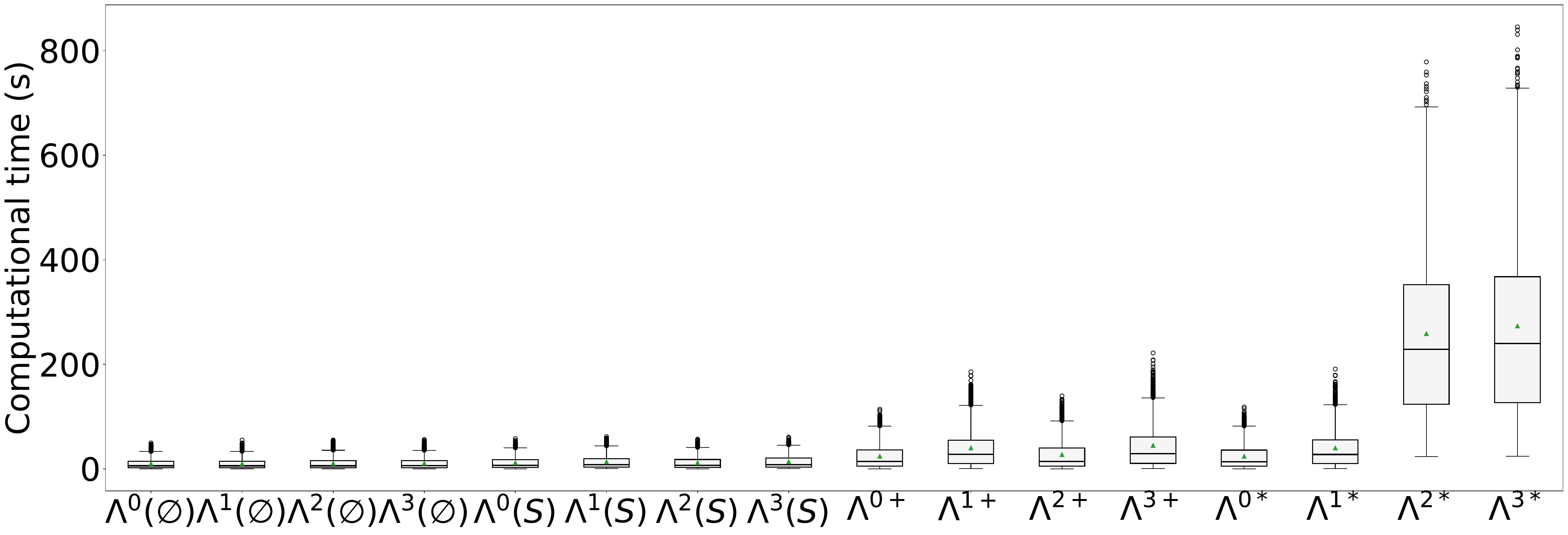} \\
(d) com-youtube 

\caption{Computational time plots for different upper bounds} 
\label{fig:T}
\end{figure}

For large-scale problems where computational time may be a concern, we recommend using the enumerated augmentation $\Lambda^{3+}$ and the unified augmentation $\Lambda^{1*}$. They often have much lower computational times than the unified augmentations $\Lambda^{2*}$ and $\Lambda^{3*}$ (see \textbf{ego-facebook} and \textbf{com-youtube} results in \Cref{fig:T}, where the ground set size is 1000), while the better bound between $\Lambda^{3+}$ and $\Lambda^{1*}$ is generally close to $\Lambda^{2*}$ and $\Lambda^{3*}$ as shown in \Cref{fig:frl}. 

\section{Conclusion}
\label{sec:con}

    In this paper, we have proposed a removing strategy and a slicing strategy for constructing new data-dependent upper bounds for submodular maximization with a knapsack constraint. 
    The upper bounds constructed are shown to be much tighter than existing bounds
    by extensive experiments. 
In future work, we would like to find the ways to extend our techniques to submodular maximization problems with other constraints, such as matroid and $p$-system. 

\section*{Acknowledgments}
This research is supported by the Ministry of Education, Singapore, under its Academic Research Fund Tier 2 (Award MOE-T2EP20122-0007). Jing Tang’s work is also partially supported by National Key R\&D Program of China under Grant No.\ 2024YFA1012700, by the National Natural Science Foundation of China (NSFC) under Grant No.\ 62402410, and by Guangdong Provincial Project (No.\ 2023QN10X025).

\appendix

\section{Equivalence between the optimum of Equation (\ref{lnp:lbd 1}) and $\Lambda^1(S)$}
\label{appendix_lp_lambda1}
\begin{lemma}
\label{lem:lnp lbd 1}
    For each value $t \in [0, c(S)]$, there exists an indicator vector $\mathbf{x} = \langle x_1, x_2, \dots, x_n \rangle \in [0,1]^n$ that satisfies $G_+(b-c(S)+t)-G_-(t)=\sum_{i \in V \setminus S} f_S(i) \cdot x_i - \sum_{i \in S} f_{V \setminus \{i\}}(i) \cdot(1-x_i)$ and $\sum_{i \in V} c(i) \cdot x_i \le b$.
\end{lemma}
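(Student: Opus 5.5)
We construct $\mathbf{x}$ explicitly from the greedy prefixes that define $G_+$ and $G_-$. The vector splits into two parts, one on $V\setminus S$ and one on $S$, which we handle separately.

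\emph{The part on $V\setminus S$.} Let $y = b-c(S)+t$, so $y\ge 0$ since $t\ge 0$ and $c(S)\le b$ can be assumed. (If $c(S)>b$, take $y=\max\{\cdot,0\}$; we expect this to be benign, since $G_+$ is only used on its natural domain.) Let $r$ be the lowest index in $V\setminus S$ with $\sum_{i=p+1}^{r}c(i)>y$. We set
\begin{itemize}
\item $x_i=1$ for $p+1\le i\le r-1$;
\item $x_r=\bigl(y-\sum_{i=p+1}^{r-1}c(i)\bigr)/c(r)\in[0,1)$;
\item $x_i=0$ for $i>r$.
\end{itemize}
If no such $r$ exists, i.e.\ $y\ge c(V\setminus S)$, we set all $x_i=1$ on $V\setminus S$. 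In that case $G_+(y)$ is read as $f$ summed over all of $V\setminus S$ (the natural saturation convention), and the cost equation becomes an inequality. Then $\sum_{i\in V\setminus S} f_S(i)x_i = G_+(y)$ follows directly from \Cref{eq:G+}, since $d_S(r)\,c(r)\,x_r$ reproduces the fractional term. The cost used is $\sum_{i\in V\setminus S}c(i)x_i = y$, or at most $y$ in the saturated case.

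\emph{The part on $S$.} The removed mass is represented by $1-x_i$. Let $s$ be the lowest index in $S$ with $\sum_{i=1}^{s}c(i)>t$. We set
\begin{itemize}
\item $1-x_i=1$, i.e.\ $x_i=0$, for $i<s$;
\item $1-x_s=\bigl(t-\sum_{i=1}^{s-1}c(i)\bigr)/c(s)$;
\item $x_i=1$ for $i>s$.
\end{itemize}
If $t=c(S)$ and no such $s$ exists, we set $x_i=0$ on all of $S$. By \Cref{eq:gminus} this gives $\sum_{i\in S}f_{V\setminus\{i\}}(i)(1-x_i)=G_-(t)$, using $d_{V\setminus\{s\}}(s)\,c(s)=f_{V\setminus\{s\}}(s)$. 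It also gives $\sum_{i\in S}c(i)x_i = c(S)-t$.

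\emph{Conclusion.} Adding the two parts, the objective equals $G_+(b-c(S)+t)-G_-(t)$. The total cost is at most $(b-c(S)+t)+(c(S)-t)=b$. All entries of $\mathbf{x}$ lie in $[0,1]$ by the choice of $r$ and $s$.

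The only delicate step is the boundary handling. One case is when $t$ falls exactly at a prefix sum, where the fractional coordinate becomes $0$ and the formula still matches. The other is when the budget $y$ exceeds $c(V\setminus S)$. We would first check that the definition of $G_+$ implicitly caps at the full remaining set, and we would note that the cost constraint only needs $\le b$, which the saturated case still satisfies.
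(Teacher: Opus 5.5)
Your construction coincides with the paper's: the same $x_s=\bigl(\sum_{i=1}^{s}c(i)-t\bigr)/c(s)$ and $x_r=\bigl(b-c(S)+t-\sum_{i=p+1}^{r-1}c(i)\bigr)/c(r)$, with $0$'s and $1$'s around them, and the same verification of the two sums and of the cost $(c(S)-t)+(b-c(S)+t)=b$; your saturated-case conventions are a harmless refinement the paper skips. The one thing to drop is the clamp $y=\max\{\cdot,0\}$: if $c(S)>b$ and $t<c(S)-b$, the clamped vector costs $c(S)-t>b$ (and the lemma can genuinely fail there), so that case must instead be excluded by assuming $c(S)\le b$, which the paper's proof also tacitly needs since its $x_r$ would otherwise be negative.
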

\begin{proof}
    We prove this lemma by constructing an $\mathbf{x}$ that satisfies the constraints.

    Let $r$ be the index of the last element covered by $G_+(b-c(S)+t)$, i.e., $\sum_{i=p+1}^{r-1}f_S(i) < G_+(b-c(S)+t) \le \sum_{i=p+1}^{r}f_S(i)$. Similarly, let $s$ be the index of the last element covered by $G_-(t)$, i.e., $\sum_{i=1}^{s-1}f_{V\setminus\{i\}}(i) < G_-(t) \le \sum_{i=1}^{s}f_{V\setminus\{i\}}(i)$.

    We construct an $\mathbf{x}$ as follows: 
\begin{equation*}
x_i=\left\{
\begin{aligned}
    0 &,& 1 \le i \le s-1, \\
    \frac{\sum_{i=1}^{s}c(i)-t}{c(s)} &,& i =s,\\
    1 &,& s+1 \le i \le p, \\
    1 &,& p+1 \le i \le r-1, \\
    \frac{b-c(S)+t-\sum_{i=p+1}^{r-1}c(i)}{c(r)}&,& i=r, \\
    0 &,& r+1 \le i \le n.
\end{aligned}
\right.
\end{equation*}

Then, we have
\begin{eqnarray*}
    G_+(b-c(S)+t) &=& \sum_{i=1}^{r-1}f_S(i) 
    +\left(b-c(S)+t-\sum_{i=1}^{r-1}c(i)\right) \cdot \frac{f_S(r)}{c(r)} \\
    &=& \sum_{i=p+1}^{r-1}f_S(i) \cdot x_i + f_S(r) \cdot x_r 
    = \sum_{i \in V \setminus S}f_S(i) \cdot x_i,
\end{eqnarray*}
and
\begin{align*}
    & G_-(t) = \sum_{i=1}^{s-1} f_{V \setminus \{i\}}(i) + \left(t-\sum_{i=1}^{s-1}c(i)\right) \cdot \frac{f_{V \setminus \{s\}}(s)}{c(s)} \\
    &= \sum_{i=1}^{s-1} f_{V \setminus \{i\}}(i) \cdot (1-x_i) + f_{V \setminus \{s\}}(s) \cdot (1-x_s) 
    = \sum_{i \in S} f_{V \setminus \{i\}}(i) \cdot (1-x_i).
\end{align*}
Thus, $G_+(b-c(S)+t)-G_-(t)=\sum_{i \in V \setminus S}f_S(i) \cdot x_i -\sum_{i \in S}f_{V \setminus \{i\}}(i) \cdot (1-x_i)$.

Finally,
\begin{equation*}
    \sum_{i \in V} c(i) \cdot x_i 
    = \sum_{i \in S} c(i) \cdot x_i + \sum_{i \in V\setminus S} c(i) \cdot x_i 
    = c(S) - t + b-c(S) + t 
    = b.
\end{equation*}    
Hence, the $\mathbf{x}$ constructed satisfies the two constraints. 
\end{proof}

\begin{lemma}
\label{lem:lnp lbd 1 part 2}
    For each indicator vector $\mathbf{x}  = \langle x_1, x_2, \dots, x_n \rangle \in [0,1]^n$ that satisfies $\sum_{i\in V}c(i) \cdot x_i \le b$, there exists a value $t \in [0, c(S)]$ such that $G_+(b-c(S)+t)-G_-(t) \ge \sum_{i \in V \setminus S} f_S(i) \cdot x_i - \sum_{i \in S} f_{V \setminus \{i\}}(i) \cdot(1-x_i)$.
\end{lemma}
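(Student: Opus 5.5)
Given a feasible $\mathbf{x}$, I will choose $t$ to be the total fractional cost removed from $S$:
\[
t = \sum_{i\in S} c(i)(1-x_i) \in [0,c(S)].
\]
Then I compare the two parts of the objective with $G_+(b-c(S)+t)$ and $G_-(t)$ separately. Both comparisons are fractional-knapsack (exchange) statements, since $G_+$ and $G_-$ are exactly greedy fractional-knapsack values.

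\emph{Step 1 (gain part).} From $\sum_{i\in V}c(i)x_i\le b$ we get
\[
\sum_{i\in V\setminus S}c(i)x_i \le b-\sum_{i\in S}c(i)x_i = b-c(S)+t .
\]
The vector $(x_i)_{i\in V\setminus S}$ is therefore a fractional selection from $V\setminus S$ with total cost at most $y:=b-c(S)+t$. Because $V\setminus S$ is sorted by non-increasing density $d_S(i)$, $G_+(y)$ is the maximum of $\sum_{i\in V\setminus S} f_S(i)x_i$ over all fractional selections of cost at most $y$. I will prove this via the standard exchange argument: cost mass can always be moved from a later (lower-density) element to an earlier, not fully used one without decreasing the value, and every cost-$\le y$ fractional vector is dominated by the greedy prefix. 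Monotonicity of $G_+$ covers any slack. Hence $\sum_{i\in V\setminus S} f_S(i)x_i\le G_+(b-c(S)+t)$. (If $y$ exceeds $c(V\setminus S)$, I interpret $G_+$ as constant beyond the total, consistent with its definition.)

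\emph{Step 2 (removal part).} Set $y_i=1-x_i\in[0,1]$ for $i\in S$, so that $\sum_{i\in S}c(i)y_i=t$ exactly. Because $S$ is sorted by non-decreasing cutoff density, $G_-(t)$ is the \emph{minimum} of $\sum_{i\in S} f_{V\setminus\{i\}}(i)\,y_i$ over fractional selections from $S$ of cost exactly $t$. The same exchange argument applies with the inequalities reversed: moving cost mass toward lower-density elements does not increase the value. Hence $\sum_{i\in S} f_{V\setminus\{i\}}(i)(1-x_i)\ge G_-(t)$.

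Subtracting the inequality of Step 2 from that of Step 1 gives the claimed inequality.

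The main obstacle is making the exchange arguments rigorous and handling the boundary cases. I will handle the exchange cleanly by writing each fractional selection as a measure on $[0,\text{cost}]$ and comparing densities pointwise after sorting: the greedy selection places its cost on the highest (respectively lowest) density levels, so a rearrangement inequality gives the result. The edge cases to check are zero-cost elements, which I will treat as taken for free, and $t=c(S)$, where $s$ is undefined; there I will use the natural extension $G_-(c(S))=\sum_{i\in S}f_{V\setminus\{i\}}(i)$.
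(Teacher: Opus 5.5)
Your proof is correct and follows the paper's argument: both use the same choice $t=c(S)-\sum_{i\in S}c(i)x_i$ and the same two greedy fractional-knapsack comparisons, bounding the gain part by $G_+(b-c(S)+t)$ and the removal part from below by $G_-(t)$. The one difference is that you handle budget slack through monotonicity of $G_+$, while the paper first reduces to an optimal, budget-tight $\mathbf{x}$; that reduction tacitly assumes $c(V)\ge b$, so your version is slightly cleaner.
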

\begin{proof}
    It suffices to prove the claim for an indicator vector producing the optimal function value of Equation (\ref{lnp:lbd 1}), which maximizes the right-hand side of the inequality to be proved. Since the coefficients of all variables $x_i$'s are non-negative, the objective function value of Equation (\ref{lnp:lbd 1}) never decreases with increasing $x_i$ values. Thus, without loss of generality, we can assume that the indicator vector producing the optimal function value of Equation (\ref{lnp:lbd 1}) satisfies $\sum_{i\in V}c(i) \cdot x_i = b$. Let $t = \sum_{i \in V \setminus S} c(i) \cdot x_i - b + c(S) = c(S) - \sum_{i \in S} c(i) \cdot x_i \ge 0$. 
    We can infer that $\sum_{i \in V\setminus S} f_S(i) \cdot x_i \le G_+(\sum_{i \in V\setminus S}c(i) \cdot x_i)$ since $G_+$ greedily picks elements with the highest marginal densities. Thus,    
\begin{eqnarray*}
    \sum_{i \in V \setminus S} f_S(i) \cdot x_i \le G_+(b-c(S)+t).
\end{eqnarray*}
    Similarly, we can infer that $\sum_{i \in S} f_{V \setminus \{i\}}(i) \cdot (1 - x_i) \ge G_-(\sum_{i \in S}c(i)\cdot (1-x_i))$ since $G_-$ greedily picks elements with the lowest cutoff densities. Moreover, it follows from $\sum_{i \in V}c(i) \cdot x_i = b$ that $\sum_{i \in S}c(i)\cdot (1-x_i) = c(S) - \sum_{i \in S}c(i)\cdot x_i = c(S)-b+\sum_{i \in V \setminus S} c(i) \cdot x_i$. Therefore,
\begin{align*}    
    \sum_{i \in S} f_{V \setminus \{i\}}(i) \cdot (1 - x_i) & \ge G_-(\sum_{i \in S}c(i)\cdot (1-x_i)) 
    \\ & 
    = G_-(c(S)-b+\sum_{i \in V \setminus S} c(i) \cdot x_i) 
    = G_-(t).
\end{align*}

    Hence, we can conclude that $G_+(b-c(S)+t)-G_-(t) \ge \sum_{i \in V \setminus S} f_S(i) \cdot x_i - \sum_{i \in S} f_{V \setminus \{i\}}(i) \cdot(1-x_i)$.
\end{proof}

    Lemma \ref{lem:lnp lbd 1} and Lemma \ref{lem:lnp lbd 1 part 2} together guarantee the equivalence between the optimum of \Cref{lnp:lbd 1} and $\Lambda^1(S)$.

\section{Proof of Lemma \ref{lem:uboffT}}\label{appendix_proof_of_lem:uboffT}
\textbf{Lemma \ref{lem:uboffT}.}
\textit{For any set $T \subseteq V \setminus S$ with $c(T) \le b$, $f_S(T)$ has a valid partition with respect to $S$.}
\begin{proof} 
We construct a valid partition of $f_S(T)$ to prove this lemma. 
Assume that $T$ has $m$ elements: 
$T=\{t_1,t_2,\dots,t_m\}$ 
where $p+1 \leq t_1 < t_2 < \cdots < t_m \leq n$. To simplify presentation, we define $T_i:=\{t_1,t_2,\dots,t_i\}$ and $T_0:= \emptyset$. 

For each element $t_i \in T$, we set $s_{t_i} = \frac{d_{S \cup T_{i-1}}(t_i)}{d_S(t_i)}\cdot c(t_i)$ and $v_{t_i} = f_{S \cup T_{i-1}}(t_i)$.
For each element $j \in V \setminus T$, we set $s_j = 0$ and $v_j = 0$.
We verify all the conditions of Definition \ref{def:partitionnew}.

For condition (i), 
\begin{equation*}
\sum_{i=1}^n v_i = \sum_{i=1}^m v_{t_i} = \sum_{i=1}^m f_{S \cup T_{i-1}}(t_i) = f_S(T).
\end{equation*}

For condition (ii), 
\begin{align*}
\sum_{i=1}^n s_i &= \sum_{i=1}^m s_{t_i} 
 = \sum_{i=1}^m \frac{d_{S \cup T_{i-1}}(t_i)}{d_S(t_i)}\cdot c(t_i) 
 \le \sum_{i=1}^m c(t_i) = c(T) \le b,
\end{align*}
where the first inequality is due to the submodularity of $f$. 

For condition (iii),
for each element $t_i \in T$, 
\begin{equation*}
d_S(t_i) \cdot s_{t_i} = d_{S \cup T_{i-1}}(t_i) \cdot c(t_i) = f_{S \cup T_{i-1}}(t_i) = v_{t_i},
\end{equation*}
and for each element $j \in V \setminus T$,
\begin{equation*}
d_S(j) \cdot s_j = 0 = v_j.
\end{equation*}

For condition (iv),
for each element $j \in V\setminus T$, let $r$ be the highest index 
satisfying $t_r \le j$ (define $r = 0$ and $t_r = 0$ if $j < t_1$), then
\begin{equation*}
f_S(V_{j}) \ge \sum_{k=1}^{r} f_{S \cup T_{k-1}}(t_k) 
= \sum_{k=1}^{r} v_{t_k} = \sum_{k=1}^{t_r} v_k 
= \sum_{k=1}^{j} v_k,
\end{equation*}
where the inequality is due to the monotonicity of $f$. 
\end{proof}

\section{Proof of Lemma \ref{lem:maxvalid}}\label{appendix_proof_of_lem:maxvalid}
\textbf{Lemma \ref{lem:maxvalid}.}
\textit{The maximum value $v$ that has a valid partition is given by $\Lambda^2(S)$.} 
\begin{proof}
Let the cost sequence generated for computing $\Lambda^2(S)$ be $s_1,s_2,\dots,s_n$ and the corresponding value sequence be $v_1,v_2,\dots,v_n$. If $s_1,s_2,\dots,s_n$ do not use up the budget $b$, i.e., $\sum_{i=p+1}^{n} \frac{d_{S \cup V_{i-1}}(i)}{d_S(i)}\cdot c(i) < b$, the claim is straightforward since $\sum_{i=1}^{n} v_i = \sum_{i=1}^{n} f_{S \cup V_{i-1}}(i) = f_S(V_{n})$ must be the maximum possible due to condition (iv) of Definition \ref{def:partitionnew} (by setting $i=n$). Now assume that $\sum_{i=1}^{n}s_i = b$. Let $t$ be the lowest index satisfying $\sum_{i=1}^{t}s_i = b$. By the procedure for computing $\Lambda^2(S)$, 
\begin{itemize}
\item for each $p+1 \le i < t$, $s_i = \frac{d_{S \cup V_{i-1}}(i)}{d_S(i)}\cdot c(i)$ and $\sum_{j=1}^{i}v_j = f_S(V_{i})$; 
\item $s_t = \left(b-\sum_{j=p+1}^{t-1}\frac{d_{S \cup V_{j-1}}(j)}{d_S(j)}\cdot c(j)\right)\cdot d_S(t)$ and $\sum_{j=1}^{t}v_j = \Lambda^2(S)$; 
\item for each $i > t$, $s_i = 0$ and  $\sum_{j=1}^{i}v_j = \Lambda^2(S)$.
\end{itemize}

We prove the lemma by contradiction. Let $\tilde{v}$ be the maximum value that has a valid partition. Assume that $\tilde{v} > \Lambda^2(S)$. 
There can be multiple valid partitions of $\tilde{v}$. Among all valid partitions of $\tilde{v}$, we pick one with the lowest total cost. 
If there are multiple valid partitions with the same lowest total cost, we pick one whose cost sequence is the last in the lexicographic order (for any two cost sequences, we can check the lowest index where the costs differ; the sequence with a smaller cost at this index is placed before the other sequence according to the lexicographic order; the lexicographic order is a total order).
Let $\tilde{s}_1,\tilde{s}_2,\dots,\tilde{s}_n$ be the cost sequence and $\tilde{v}_1,\tilde{v}_2,\dots,\tilde{v}_n$ be the value sequence in this valid partition, where $\sum_{i=1}^{n}\tilde{v}_i = \tilde{v}$.

Now we check the difference between the two sequences. We iterate through the sequence $\langle {p+1},{p+2},\dots,n\rangle$ to find the first element $q$ where $s_q \neq \tilde{s}_q$. We can assume $q \le t$, since otherwise $\sum_{i=1}^{t}\tilde{s}_i = \sum_{i=1}^{t}s_i = b$ and hence $\tilde{s}_q = 0 = s_q$ must hold.

If $\tilde{s}_q > s_q$, we must have $q < t$, because $\tilde{s}_t \leq b - \sum_{i=1}^{t-1}\tilde{s}_i = \sum_{i=1}^{t}s_i - \sum_{i=1}^{t-1}\tilde{s}_i = s_t$. We can construct a new cost sequence $s_1,\dots,s_{q-1},s_q,\tilde{s}_{q+1},\dots,\tilde{s}_n$ and a new value sequence $v_1,\dots,v_{q-1},$ $\sum_{i=1}^{q}\tilde{v}_i-\sum_{i=1}^{q-1}v_i,\tilde{v}_{q+1},\dots,\tilde{v}_n$. We verify that they satisfy all the conditions of Definition \ref{def:partitionnew}.
Condition (i) holds because the total value remains unchanged.
Condition (ii) holds because the total cost $\sum_{i=1}^{q}s_i + \sum_{i=q+1}^{n}\tilde{s}_i < \sum_{i=1}^{q}\tilde{s}_i + \sum_{i=q+1}^{n}\tilde{s}_i \le b$. 
Condition (iii) obviously holds for any $i \neq q$ as the cost-value pair of index $i$ is inherited from an existing valid partition. For $i=q$, we have $d_S(q)\cdot s_q \ge v_q = \sum_{i=1}^{q}v_i-\sum_{i=1}^{q-1}v_i = f_S(V_q)-\sum_{i=1}^{q-1}v_i \ge \sum_{i=1}^{q}\tilde{v}_i-\sum_{i=1}^{q-1}v_i$.
Condition (iv) holds for any $i$ since the total value up to index $i$ is the same as an existing valid partition.
Thus, the new cost and value sequences form a valid partition. The total value of the new partition is the same as $\sum_{i=1}^{n}\tilde{v}_i$, but the total cost of the new partition is lower than $\sum_{i=1}^{n}\tilde{s}_i$, contradicting that $\tilde{s}_1,\tilde{s}_2,\dots,\tilde{s}_n$ have the lowest total cost among all valid partitions.

If $\tilde{s}_q < s_q$, we can also construct a new valid partition contradicting the selection of $\tilde{s}_1,\tilde{s}_2,\dots,\tilde{s}_n$. 
Notice that $\sum_{i=1}^{q-1}v_i = \sum_{i=1}^{q-1}d_S(i) \cdot s_i = \sum_{i=1}^{q-1}d_S(i) \cdot \tilde{s}_i \ge \sum_{i=1}^{q-1}\tilde{v}_i$, where the first equality follows from the definition of $\Lambda^2(S)$ and the inequality follows from condition (iii) of Definition \ref{def:partitionnew}. Hence,
\begin{align*}
    f_S(V_{q}) - \sum_{i=1}^{q-1}\tilde{v}_i &\ge f_S(V_{q}) - \sum_{i=1}^{q-1}v_i \\
    &= f_S(V_{q}) - f_S(V_{q-1}) \\
    &= d_{S \cup V_{q-1}}(q)\cdot c(q) \\
    &= d_S(q)\cdot s_q \\
    &> d_S(q)\cdot \tilde{s}_q.
\end{align*}
Thus, condition (iv) has slack for $i=q$ in the assumed valid partition of $\tilde{v}$. 
Together with $\tilde{s}_q < s_q$, it implies that $\tilde{s}_q$ can be increased without violating both conditions (iii) and (iv) for $i=q$.
It can also be inferred that there is positive cost spent on at least one element in $q+1, q+2, \dots, n$, because otherwise $\tilde{v} = \sum_{i=1}^{n}\tilde{v}_i = \sum_{i=1}^{q}\tilde{v}_i \le \sum_{i=1}^{q}d_s(i)\cdot \tilde{s}_i \le \sum_{i=1}^{q}d_s(i)\cdot s_i \le \Lambda^2(S)$, contradicting that $\tilde{v} > \Lambda^2(S)$. 
Let $m$ be the lowest index larger than $q$ satisfying $\tilde{s}_m > 0$.
If we move a small amount of cost $\epsilon > 0$ from $\tilde{s}_m$ to $\tilde{s}_q$, $\tilde{v}_q$ can be increased by $d_S(q)\cdot \epsilon$. On the other side, $\tilde{v}_m$ can be decreased by $d_S(q)\cdot \epsilon$ (which does not violate condition (iii) for $i=m$ since $d_S(q)\cdot \epsilon \ge d_S(m)\cdot \epsilon$). 
Meanwhile, condition (iv) is not violated for any $q < i < m$ because $\tilde{v}_i = 0$ for all such $i$ values.
In this way, we have constructed a new valid partition of $\tilde{v}$ with higher cost spent on element $q$, contradicting that $\tilde{s}_1,\tilde{s}_2,\dots,\tilde{s}_n$ are the last cost sequence in the lexicographic order.  

We have identified a contradiction in every case and completed the proof.
\end{proof}

\section{Equivalence between the optimum of Equation (\ref{lnp:lbd 2}) and $\Lambda^2(S)$}
\label{appendix_lp_lambda2}
\begin{lemma}
\label{lem:lnp lbd 2}
    For any value $v$ that has a valid partition $v_1,v_2,\dots,v_n$, there exists an indicator vector $\mathbf{x} = \langle x_1, x_2, \dots, x_n \rangle \in [0,1]^n$ that satisfies all the constraints of \Cref{lnp:lbd 2} and $\sum_{i \in V \setminus S} f_S(i) \cdot x_i = v$.
\end{lemma}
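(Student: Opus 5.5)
The natural construction is to set $x_i = \frac{v_i}{f_S(i)}$ for every $i \in V \setminus S$ with $f_S(i) > 0$, and $x_i = 0$ for every other $i$ (this covers all $i \in S$, where $f_S(i)=0$ and hence $v_i = 0$ by condition (iii) of Definition \ref{def:partitionnew}). Let $s_1,\dots,s_n$ be the cost sequence that witnesses the validity of the partition. The plan is to check, in order, the range of each $x_i$, the value of the objective, and then the two families of constraints of \Cref{lnp:lbd 2}.

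\textbf{Range.} Nonnegativity is immediate from $v_i \ge 0$. For the upper bound, condition (iii) gives
\[
v_i \le d_S(i)\, s_i = f_S(i)\cdot \frac{s_i}{c(i)}.
\]
Hence $x_i \le \frac{s_i}{c(i)} \le 1$, using $s_i \le c(i)$.

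\textbf{Objective.} Whenever $f_S(i)=0$, condition (iii) forces $v_i = 0$. Therefore $f_S(i)\, x_i = v_i$ holds for every $i$, and
\[
\sum_{i\in V\setminus S} f_S(i)\, x_i = \sum_{i=1}^n v_i = v
\]
by condition (i).

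\textbf{Prefix constraints.} Since $f_S(j)\, x_j = v_j$ for all $j$, we get $\sum_{j=1}^i f_S(j)\, x_j = \sum_{j=1}^i v_j \le f_S(V_i)$ directly from condition (iv).

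\textbf{Budget constraint.} This is the only step that needs care, because the elements with $f_S(i)=0$ must be handled separately. The inequality $x_i \le s_i / c(i)$ from the range step gives $c(i)\, x_i \le s_i$ for every $i$ with $f_S(i) > 0$, and $c(i)\, x_i = 0 \le s_i$ for the remaining $i$. Summing over $i$ and applying condition (ii),
\[
\sum_{i \in V} c(i)\, x_i \le \sum_{i=1}^n s_i \le b.
\]

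Together these four checks show that $\mathbf{x}$ satisfies every constraint of \Cref{lnp:lbd 2} and attains objective value $v$, which proves the lemma.
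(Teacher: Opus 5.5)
Your proof is correct and follows the paper's own argument: it sets $x_i = v_i/f_S(i)$ on $V\setminus S$, then checks the range via condition (iii) and $s_i \le c(i)$, the objective via condition (i), the prefix constraints via condition (iv), and the budget via conditions (ii) and (iii). Your explicit handling of elements $i \in V\setminus S$ with $f_S(i)=0$ is a small improvement, since the paper's proof divides by $f_S(i)$ and $d_S(i)$ without addressing that case.
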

\begin{proof}
    We set $x_i = \frac{v_i}{f_S(i)}$ for each element $i \in V \setminus S$, and $x_i = 0$ for each element $i \in S$. 
    It is easy to observe that $\sum_{i \in V \setminus S} f_S(i) \cdot 
    x_i = v$ (recall that $v_i=0$ for each $i \in S$ in a valid partition). Now we proceed to check if the $\mathbf{x}$ constructed meets all the constraints of \Cref{lnp:lbd 2}.

    For the first constraint, for each element $i \in V$,
\begin{equation*}
\sum_{j=1}^{i} f_S(j) \cdot x_j = \sum_{j=1}^{i}  v_i \le f_S(V_{i}), 
\end{equation*}
where the inequality is due to condition (iv) of Definition \ref{def:partitionnew}.

    Let the 
    cost sequence of the valid partition be $s_1,s_2,\dots,s_n$.
    For the second constraint, 
\begin{equation*}
\sum_{i \in V} c(i) \cdot x_i = \sum_{i=p+1}^{n} \frac{v_i}{d_S(i)} 
\le \sum_{i=p+1}^{n} s_i 
\le b.  
\end{equation*}
where the two inequalities are due to conditions (ii) and (iii) of Definition \ref{def:partitionnew}. 

    In addition, for each element $i \in V \setminus S$,
\begin{equation*}
    0 \le x_i = \frac{v_i}{f_S(i)} \le \frac{s_i \cdot d_S(i)}{f_S(i)} = \frac{s_i}{c(i)} \le 1,
\end{equation*}
where the second inequality is due to condition (iii) of Definition \ref{def:partitionnew} and the third inequality is due to the constraint $s_i \le c(i)$ in Definition \ref{def:partitionnew}.
\end{proof}
\begin{lemma}
\label{lem:lnp lbd 2 part 2}
    For each indicator vector $\mathbf{x}  = \langle x_1, x_2, \dots, x_n \rangle \in [0,1]^n$ that satisfies all the constraints of \Cref{lnp:lbd 2}, $\sum_{i \in V \setminus S} f_S(i) \cdot x_i$ has a valid partition.
\end{lemma}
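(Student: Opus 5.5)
We construct an explicit valid partition directly from $\mathbf{x}$, reversing the construction in the proof of Lemma~\ref{lem:lnp lbd 2}. For each element $i \in V \setminus S$ set $v_i = f_S(i)\cdot x_i$ and $s_i = c(i)\cdot x_i$. For each element $i \in S$ set $v_i = 0$ and $s_i = 0$. Then $\sum_{i=1}^n v_i = \sum_{i \in V\setminus S} f_S(i)\cdot x_i$, which is exactly the value we want to partition. So condition (i) of Definition~\ref{def:partitionnew} holds by construction, and all $v_i, s_i$ are non-negative because $f$ is monotone and $x_i \in [0,1]$.

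Next we check the remaining requirements one at a time.
\begin{itemize}
\item The requirement $s_i \le c(i)$ holds because $x_i \le 1$.
\item Condition (ii) holds because $\sum_{i} s_i \le \sum_{i \in V} c(i)\cdot x_i \le b$, using the budget constraint of \Cref{lnp:lbd 2}. The first step is an inequality because we dropped the non-negative terms $c(i)x_i$ for $i \in S$.
\item Condition (iii) holds with equality for $i \in V\setminus S$, since $d_S(i)\cdot s_i = \frac{f_S(i)}{c(i)}\cdot c(i)\, x_i = v_i$. For $i \in S$ both sides are $0$.
\item Condition (iv) follows from the first constraint of \Cref{lnp:lbd 2}. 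We have $\sum_{j=1}^{i} v_j = \sum_{j \le i,\, j\notin S} f_S(j)\cdot x_j = \sum_{j=1}^{i} f_S(j)\cdot x_j \le f_S(V_i)$, because $f_S(j)=0$ for $j \in S$.
\end{itemize}

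The only point needing care is a degenerate case: an element with $c(i)=0$, where the density $d_S(i)$ is undefined. Either the paper's implicit assumption of positive costs already covers it, or we treat condition (iii) as $f_S(i)\cdot s_i/c(i)$ interpreted via $x_i$. Apart from that, the argument is routine verification, and it closes the equivalence between the optimum of \Cref{lnp:lbd 2} and $\Lambda^2(S)$ together with Lemmas~\ref{lem:maxvalid} and~\ref{lem:lnp lbd 2}.
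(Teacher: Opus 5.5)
Your proof is correct and follows the paper's argument: the paper likewise sets $s_i = c(i)\cdot x_i$ and $v_i = f_S(i)\cdot x_i$, then verifies conditions (ii)--(iv) from the budget constraint, the identity $d_S(i)\cdot s_i = v_i$, and the prefix constraint of \Cref{lnp:lbd 2}. The only differences are cosmetic: you set $s_i = 0$ for $i \in S$ instead of $c(i)\cdot x_i$, which is harmless since $f_S(i)=0$ there, and you explicitly check $s_i \le c(i)$.
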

\begin{proof}
    Let $s_i = c(i) \cdot x_i$ and $v_i = f_S(i) \cdot x_i$. It is easy to observe that $\sum_{i=1}^{n}v_i=\sum_{i=1}^{n}f_S(i) \cdot x_i = \sum_{i \in V \setminus S} f_S(i) \cdot x_i$.
    Now we proceed to check if $v_i$ and $s_i$ satisfy all the other conditions of a valid partition in  Definition \ref{def:partitionnew}.

    For condition (ii),
\begin{eqnarray*}
    \sum_{i=1}^{n} s_i = \sum_{i=1}^{n} c(i) \cdot x_i \le b,
\end{eqnarray*}
where the inequality is due to the second constraint of \Cref{lnp:lbd 2}.

    For condition (iii), for each element $i \in V$, \begin{eqnarray*}
    d_S(i) \cdot s_i = f_S(i) \cdot x_i = v_i.
\end{eqnarray*}

    For condition (iv), for each element $i \in V$,
\begin{eqnarray*}
    \sum_{j=1}^{i} v_i = \sum_{j=1}^{i} f_S(j) \cdot x_j \le f_S(V_i),
\end{eqnarray*}
where the inequality is due to the first constraint of \Cref{lnp:lbd 2}.

    Thus, all the conditions of Definition \ref{def:partitionnew} are satisfied. So, $\sum_{i \in V \setminus S} f_S(i) \cdot x_i$ has a valid partition.
\end{proof}
    With Lemma \ref{lem:lnp lbd 2} and Lemma \ref{lem:lnp lbd 2 part 2}, we can conclude that the optimum of \Cref{lnp:lbd 2} is equal to the maximum value that has a valid partition, i.e., $\Lambda^2(S)$.

\section{$\Lambda^3$ in linear program presentation}
\label{sec:lbd 3}
    To write the calculation of $\Lambda^3$ as a linear program, we can further add the removing item to \Cref{lnp:lbd 2}.
\begin{align}
\label{lnp:lbd 3}
    \max & \quad \sum_{i \in V \setminus S} f_S(i) \cdot x_i - \sum_{i \in S} f_{V \setminus \{i\}}(i) \cdot (1-x_i)\\
    \text{subject to} & \quad \forall i \in V, \quad \sum_{j=1}^{i} f_S(j) \cdot x_j \le f_S(V_i), \nonumber \\
    & \quad \sum_{i \in V} c(i) \cdot x_i \le b. \nonumber 
\end{align}
The optimum of \Cref{lnp:lbd 3} is equal to $\Lambda^3(S)$.
This can be proved through a similar analysis to Appendix \ref{appendix_lp_lambda1}. 

\label{appendix_lbd_3_complexity}

    If the elements in $V \setminus S$ are arranged in non-ascending order of marginal density with respect to $S$, 
    it is easy to see that the optimum of the following linear program is equal to $\Lambda^2(S)$ defined in \Cref{eq:Lambda2}: 
\begin{align}
\label{lnp:lbd 2 v3}
    \max & \quad \sum_{i \in V \setminus S} f_S(i) \cdot x_i \\
    \text{subject to} & \quad \sum_{i \in V} c(i) \cdot x_i \le b, \nonumber \\
    & \quad \forall i \in S, \quad 0\le x_i \le 1, \nonumber \\
    & \quad \forall i \in V \setminus S, \quad 0 \le x_i \le \frac{f_{S \cup V_{i-1}}(i)}{f_S(i)}. \nonumber
\end{align}
    Thus, \Cref{lnp:lbd 2 v3} is a variation of calculating $\Lambda^2(S)$.

    If we add the removing item to this linear program, we can then get a variation of calculating $\Lambda^3(S)$ as follows:
\begin{align}
\label{lnp:lbd 3 v3}
    \max & \quad \sum_{i \in V \setminus S} f_S(i) \cdot x_i -\sum_{i \in S} f_{V \setminus \{i\}}(i) \cdot (1-x_i)\\
    \text{subject to} & \quad \sum_{i \in V} c(i) \cdot x_i \le b, \nonumber \\
    & \quad \forall i \in S, \quad 0\le x_i \le 1, \nonumber \\
    & \quad \forall i \in V \setminus S, \quad 0 \le x_i \le \frac{f_{S \cup V_{i-1}}(i)}{f_S(i)}. \nonumber
\end{align}
    Comparing \Cref{lnp:lbd 3 v3} with \Cref{lnp:lbd 1}, the only difference is the feasible range for $x_i$ where $i \in V \setminus S$. Similar to computing $\Lambda^{1}(S)$ based on \Cref{lnp:lbd 1}, we can compute $\Lambda^3(S)$ by picking elements greedily according to their ``weights'' in the objective function until the budget is fulfilled to get the optimum of \Cref{lnp:lbd 3 v3}, where for each $i \in S$, its weight is $f_{V \setminus \{i\}}(i)$; for each $i \in V\setminus S$, its weight is $f_S(i)$. Consequently, the time complexity to compute $\Lambda^3(S)$ is also $O(n \log n) = O(|V| \log |V|)$ due to the sorting of the elements, the same as $\Lambda^{0}(S)$, $\Lambda^{1}(S)$ and $\Lambda^{2}(S)$.

\section{Experimental results for additional cost settings}
\label{sec:add_cost_setting}

\Cref{fig:80 small 20 large} presents the approximation guarantees derived from different upper bounds for cost setting (b). We conduct experiments for all ground sets generated with different budgets from 20 to 50 (step size 1). So, for each data set, there are 620 problem instances (20 ground sets $\times$ 31 different budgets from 20 to 50). Following the recommendations in \Cref{sec:exp}, our evaluation includes all enumerated augmentations $\Lambda^{0+}$, $\Lambda^{1+}$, $\Lambda^{2+}$, $\Lambda^{3+}$, and two unified augmentations $\Lambda^{0*}$, $\Lambda^{1*}$. 

\Cref{fig:20 small 80 large} presents the approximation guarantees derived from different upper bounds for cost setting (c). We conduct experiments for all ground sets generated with different budgets from 6 to 20 (step size 1). So, for each data set, there are 300 problem instances (20 ground sets $\times$ 15 different budgets from 6 to 20). 

\begin{figure}[!t]

\centering
\includegraphics[width=0.7\textwidth]{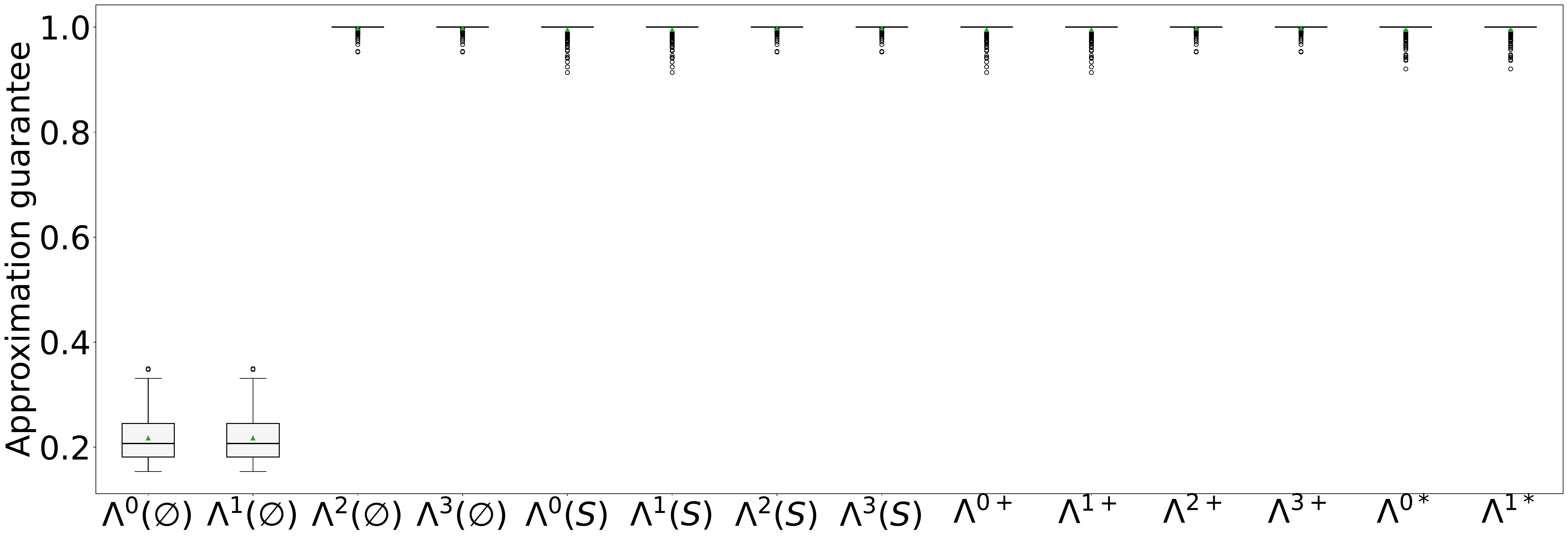} \\
(a) Adult Income 

\centering
\includegraphics[width=0.7\textwidth]{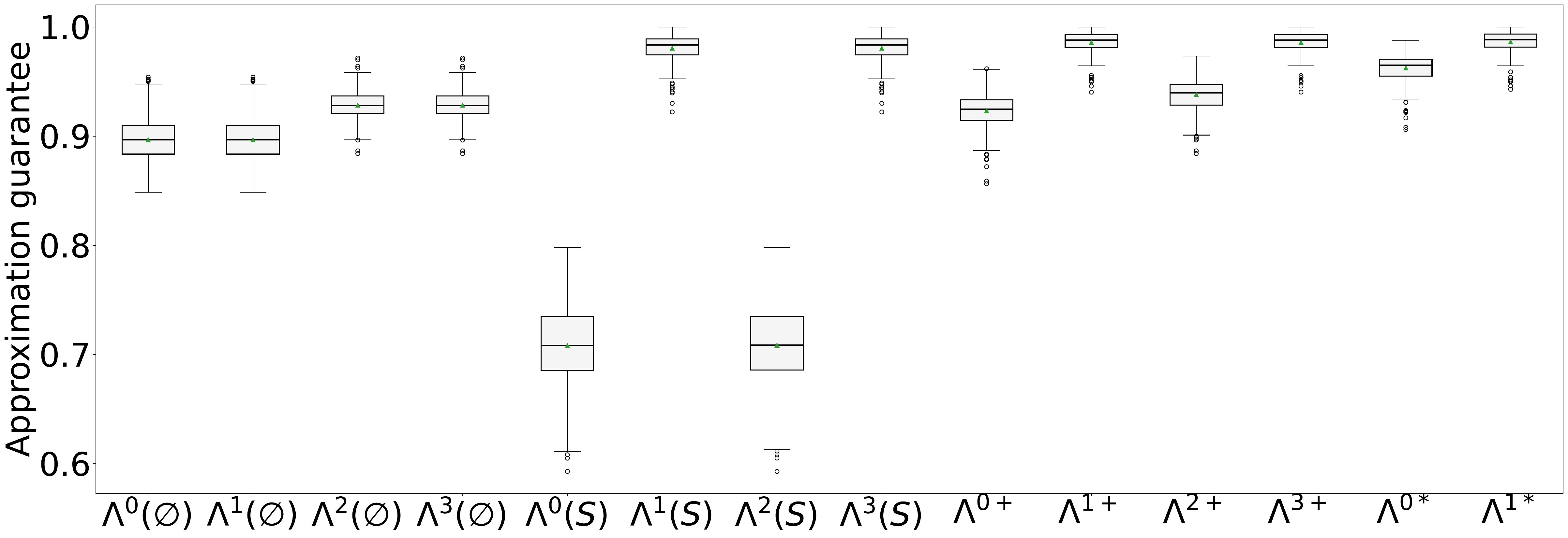} \\
(b) Caltech36 

\centering
\includegraphics[width=0.7\textwidth]{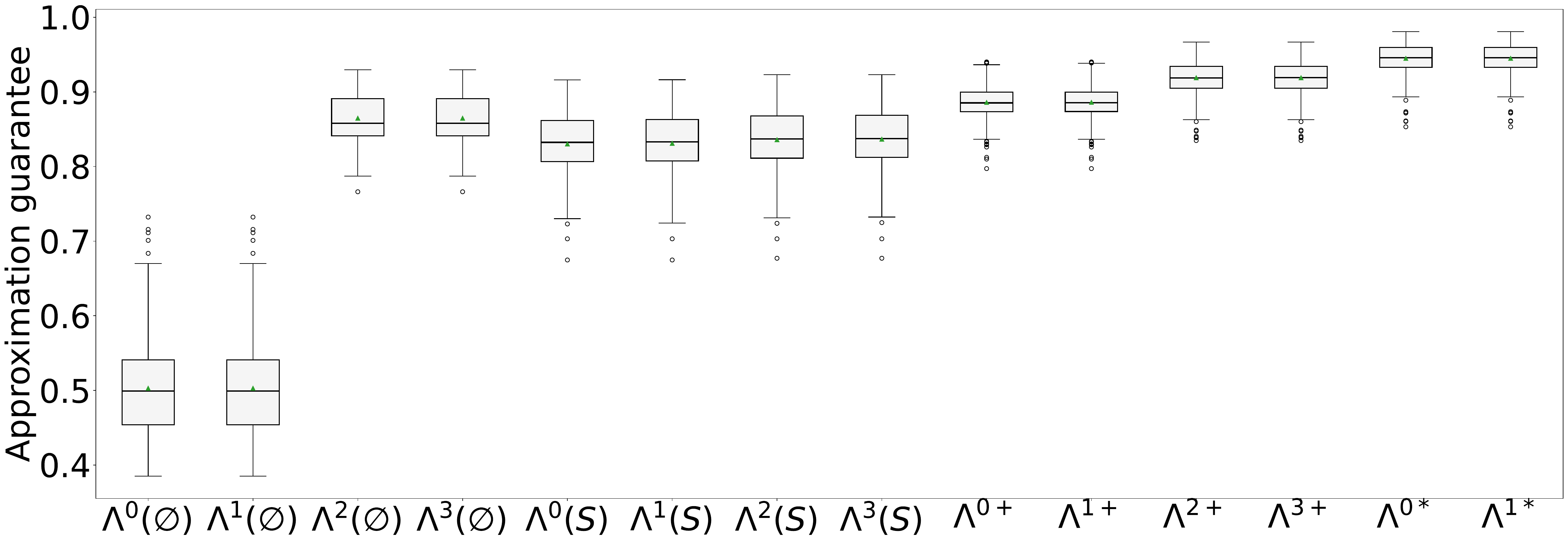} \\
(c) ego-facebook 

\centering
\includegraphics[width=0.7\textwidth]{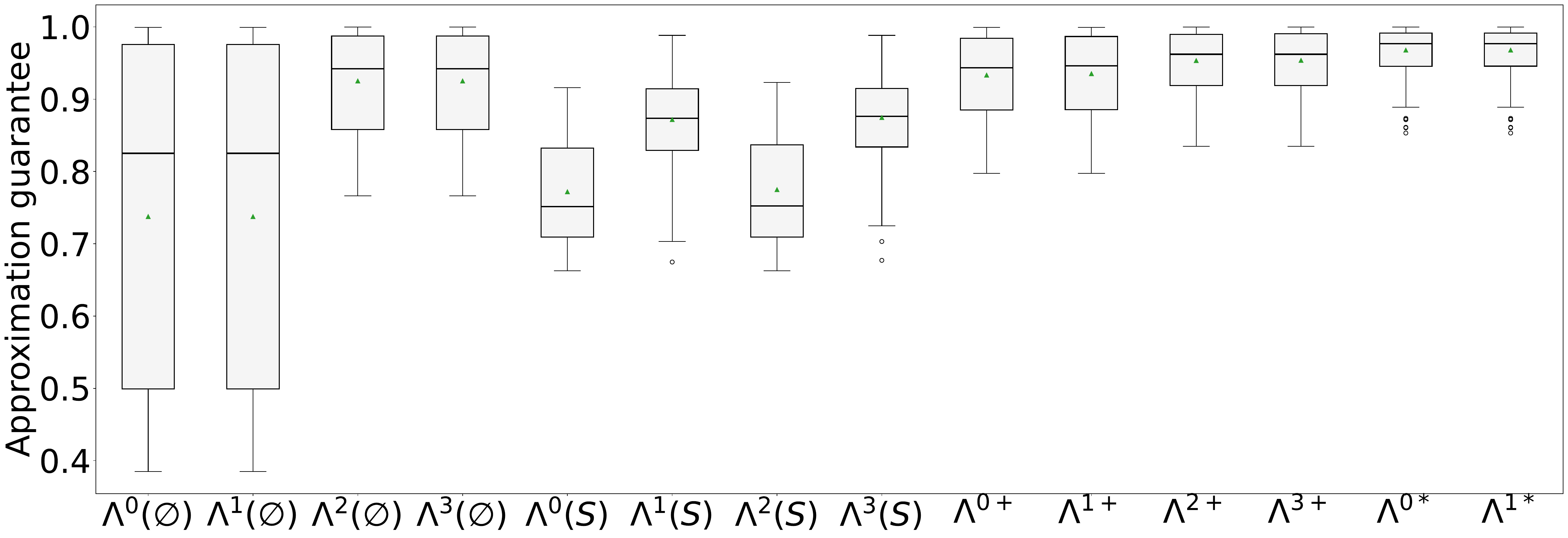} \\
(d) com-youtube 

\caption{Actual approximation guarantee plots for different upper bounds in cost setting (b)} 
\label{fig:80 small 20 large}
\end{figure}

\begin{figure}[!t]

\centering
\includegraphics[width=0.7\textwidth]{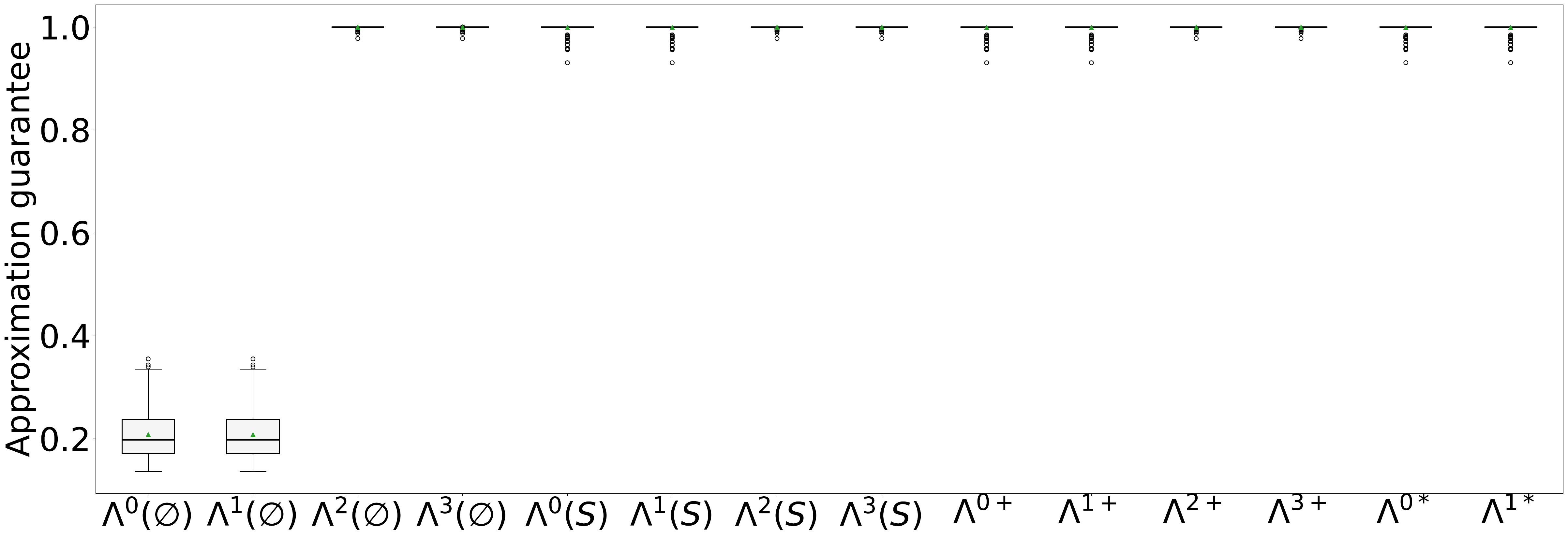} \\
(a) Adult Income 

\centering
\includegraphics[width=0.7\textwidth]{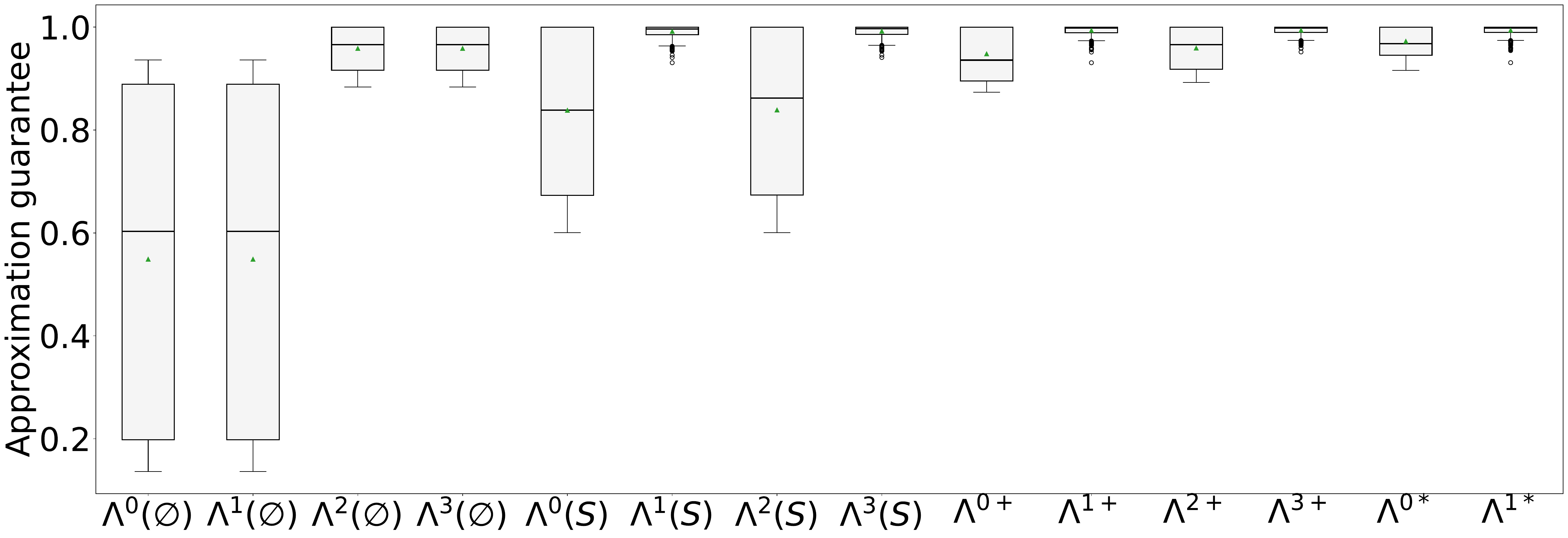} \\
(b) Caltech36 

\centering
\includegraphics[width=0.7\textwidth]{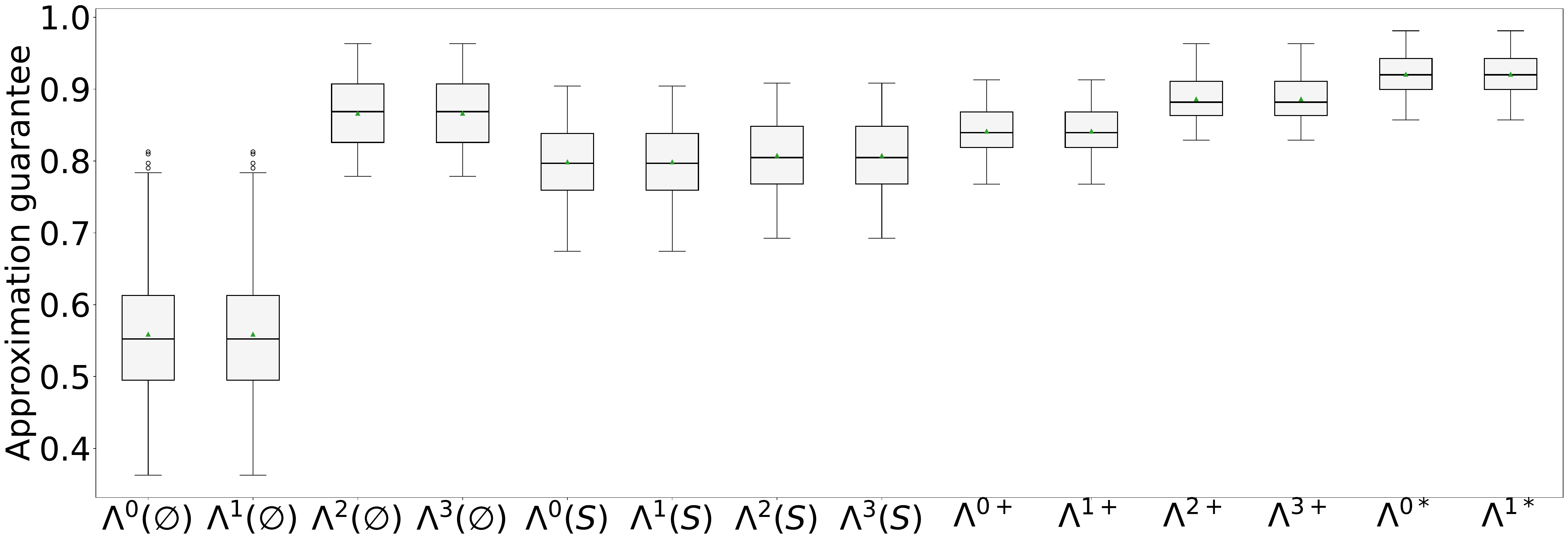} \\
(c) ego-facebook 

\centering
\includegraphics[width=0.7\textwidth]{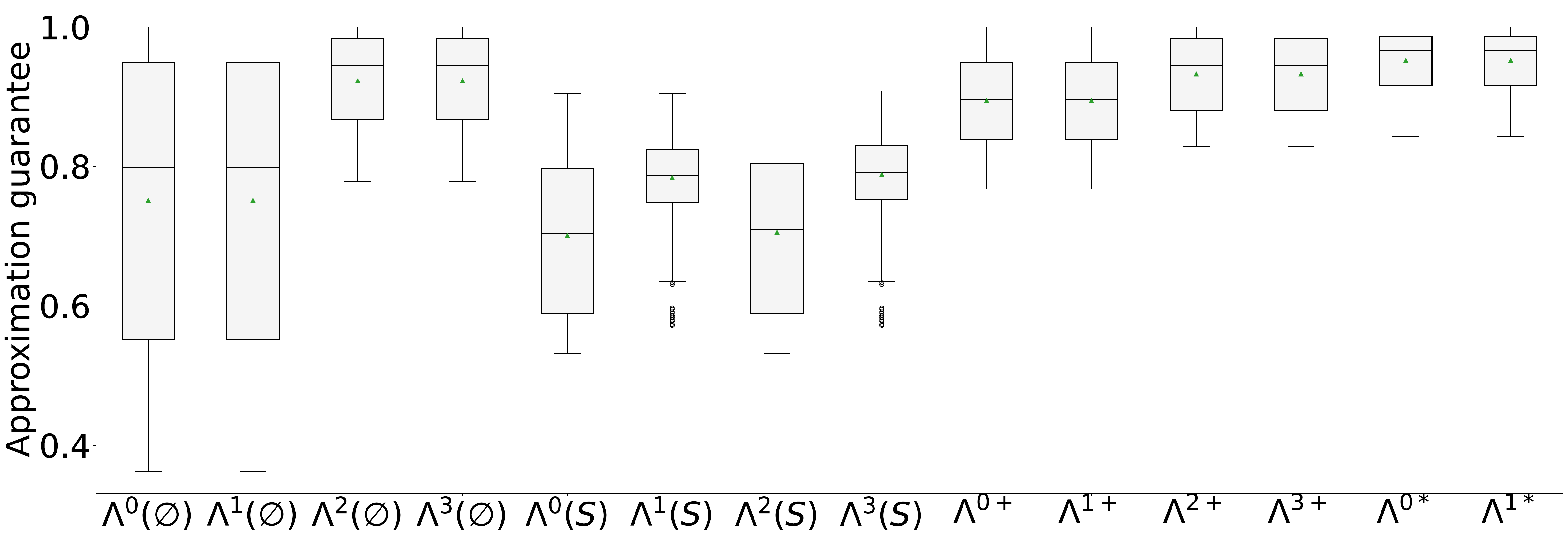} \\
(d) com-youtube 

\caption{Actual approximation guarantee plots for different upper bounds in cost setting (c)} 
\label{fig:20 small 80 large}
\end{figure}

\end{document}